\newtheorem{lemma}{Lemma}
\newtheorem{theorem}{Theorem}
\theoremstyle{definition}
\newtheorem{definition}{Definition}
\theoremstyle{remark}
\newtheorem{remark}{Remark}
\newtheorem{example}{Example}
\newcommand{\cm}{{\EuFrak B}}
\title{Implied volatility of basket options at extreme strikes}
\author{Archil Gulisashvili\footnote{Department of Mathematics, Ohio
  University, Athens, Ohio, USA. Email: \texttt{gulisash@ohio.edu}}\quad
 and\quad Peter Tankov\footnote{Laboratoire de Probabilit\'es et Mod\`eles Al\'eatoires, Universit\'e Paris Diderot, Paris, France and International Laboratory of Quantitative Finance,
  National Research University ``Higher School of Economics", Moscow,
  Russia. Email: \texttt{tankov@math.univ-paris-diderot.fr}}}
\date{}
\begin{document}
\maketitle

\begin{abstract}
In the paper, we characterize the asymptotic behavior of the implied volatility of a
basket call option at large and small strikes in a variety of
settings with increasing generality. First, we obtain an asymptotic formula with an error bound
for the left wing of the implied volatility, under the assumption that the dynamics of asset prices are described by the multidimensional Black-Scholes model. Next, we find the leading term of asymptotics of the implied volatility in the case where the asset prices follow the multidimensional Black-Scholes model with time change by
an independent increasing stochastic process. Finally, we deal with a general
situation in which the dependence between the assets is described by a given copula
function. In this setting, we obtain a model-free tail-wing formula
that links the implied volatility to a special characteristic of the copula called \emph{the weak lower tail dependence function}.  \\
\\
Key words: implied volatility asymptotics, basket options, index options, large/small strikes, time change, copula
\end{abstract}

\section{Introduction}
In option markets, prices of vanilla call and put options are commonly
quoted in terms of their
\emph{implied volatility} $I(T,K)$, defined as the value of the
volatility parameter which must be substituted into the Black-Scholes
option pricing formula to obtain the quoted option price. Similarly,
given a risk-neutral model, one can define the function $(T,K)\mapsto
I(T,K)$ from the prices of vanilla options computed for that
model. However, since in most stochastic asset price models the implied volatility 
function is not known explicitly, it becomes important to obtain efficient and accurate asymptotic
approximations for it. Such approximations are useful for at least
two reasons. First, they may shed light on the qualitative behavior
of the implied volatility in the asset price model, and also on the effect of different
model parameters on the shape of the model-generated implied
volatility surface. Second, they allow to perform an approximate
calibration of the model by comparing the market implied volatility with the asymptotic
approximation. Such preliminary estimates can be used as intelligent
guesses in the construction of a numerical calibration algorithm to accelerate its
convergence. 

Approximations to the implied volatility have been studied
by many authors in a variety of asymptotic regimes, both in specific
models and in model-independent settings. One of the early references
on the subject is the book by Lewis \cite{lewis} dealing with
stochastic volatility models. Various model-free formulas describing the wing behavior of the implied volatility were obtained in the last decade.
To our knowledge, celebrated Lee's moment formulas were the first model-independent asymptotic formulas for the implied volatility at extreme strikes (see \cite{lee.04}). Lee's results were later refined by Benaim
and Friz \cite{benaim.friz.09,benaim.friz.08} and Gulisashvili \cite{gulisashvili.10,gulisashvili.12a,gulisashvili.12b}. In Gao and Lee \cite{gao.lee.11}, higher order asymptotic formulas for the implied volatility at extreme strikes were found, and in Tehranchi \cite{tehranchi.14}, uniform estimates for the implied volatility are obtained. Small-time behavior of
implied volatility is analyzed, among other papers, in \cite{busca02} (in local volatility
models), \cite{forde2009small} (for the Heston stochastic volatility
model), \cite{medvedev04} (for jump-diffusions), and in \cite{andersen2013asymptotics,Ford_Figueroa-Lopez,mijatovic2013new,tankov.pplnmf} (for exponential L\'evy models). Formulae for the implied volatility far
from maturity are given in \cite{jacquier.forde.11} (for the Heston
model) and \cite{tehranchi.09}
(model-independent). Finally, sharp
price and implied volatility
approximations for various models have been obtained as ``expansions
around the Black-Scholes model'' in \cite{benhamou.al.09,gobet2010time}. 

Implied volatility is also quoted in the market for options on a basket of
stocks (or on a market index). Note that the Black-Scholes formula can
be applied to price a vanilla option by considering the entire basket (index) as a
log-normal random variable. In such a case, finding reliable asymptotic approximations to the implied volatility
can be even more important, since calculating the exact value numerically can be computationally very expensive due
to the large dimension of the basket. Approximations based on the
\emph{small-noise} asymptotics in multidimensional local volatility
models have been developed in \cite{avellaneda.al.02} and more
recently refined in \cite{bayer2013asymptotics}, but in other
asymptotic regimes, much less is known about multi-asset options, than
in the single-asset case. 

Our main goal in the present paper is to characterize the asymptotic behavior of
the implied volatility of a call option on a basket of stocks (with
positive weights) for large and small strikes. Three
different classes of multidimensional risk-neutral models with
increasing generality are considered in the paper. In Section
\ref{S:nBS}, we discuss the case of correlated log-normal
assets, in other words, the assets which follow the multidimensional
Black-Scholes model. Using a recent characterization of the tail behavior of sums
of correlated log-normal random variables
\cite{gulisashvili.tankov.13}, we obtain a sharp asymptotic formula with error
estimates for the implied volatility at small strikes. On the other hand, the asymptotics of the 
implied volatility at large strikes 
can be easily characterized using the results obtained in
\cite{asmussen2008asymptotics}. It turns out that for large strikes, 
the implied volatility of a basket call option is approximated by the highest volatility among the stocks in the basket. 

Section \ref{S:timechange} deals with the case, where the assets
follow the multidimensional Black-Scholes model time-changed by an
independent increasing stochastic process. It is assumed in this section that the marginal density of the
time-change process decays at infinity like the function $s\mapsto s^{\alpha}e^{-\theta s}$ with $\alpha\in\mathbb{R}$
and $\theta> 0$. The class of such models includes 
standard multidimensional extensions of various exponential L\'evy models, for instance,
of the variance gamma model, the normal inverse Gaussian model, or the generalized hyperbolic model. To our knowledge, for such a class of multidimensional models, the tail behavior of marginal distributions has not been studied before. In Section \ref{S:timechange}, we provide two-sided estimates for the distribution
function of the asset price in the time-changed multidimensional Black-Scholes model, and use these estimates to find the leading term in the asymptotic expansion of the
implied volatility. 

Finally, in section \ref{S:copula}, we deal with the case
where the assets in the basket are correlated, and the dependence structure is described by a given copula function. Here we obtain an asymptotic formula that can be considered as a generalization to the multidimensional setting of one of the tail-wing formulae established in \cite{benaim.friz.09}. The new tail-wing formula uses a special characteristic of the copula called \emph{weak lower tail dependence function}. This notion was recently introduced in \cite{tankov.14}.  

\paragraph{Remarks on the notation used in the paper}
\begin{itemize}
\item Let $f$ and $g$ be functions defined on $\mathbb{R}$, and let $a\in[-\infty,\infty]$. Throughout the present paper, we write ``$f\sim g$ as $x\to a$" provided that
$$
\lim_{x\to a}\frac{f(x)}{g(x)} = 1.
$$ 
We also use the notation ``$f \lesssim g$ as $x\to a$" if 
$$
\limsup_{x\to a}\frac{f(x)}{g(x)} \leq 1,
$$ 
and write ``$f(x)\approx g(x)$ as $x\to a$" if there exist $c_1> 0$ and $c_2> 0$ such that 
$$
c_1 g(x)\le f(x)
\le c_2 g(x)
$$ 
for all $x$ in some neighborhood of $a$.

\item A positive function $f$ defined in $[a,\infty)$ for some $a> 0$ is called regularly varying at infinity with index $\alpha\in\mathbb{R}$ if for any $\lambda> 0$,
$$
\lim_{x\to 0}\frac{f(\lambda x)}{f(x)} =\lambda^{\alpha}.
$$ 
for all $\alpha >0$. The class of all regularly varying functions with index $\alpha$ is denoted by $R_{\alpha}$.
The elements of the class $R_0$ are called slowly varying functions. Regularly varying functions at zero can be defined similarly.

\item The following set will be used in the paper:
\begin{align*}
\Delta_d:&= \{w\in \mathbb R^d: w_i\geq 0,i=1,\dots,d, \text{and}\ \sum_{i=1}^d
w_i = 1\}.
\end{align*}

\item Let $w \in \Delta_d$. We set
\begin{equation}
\mathcal E(w):= -\sum_{i=1}^d w_i \log w_i,
\label{E:sas}
\end{equation}
with the convention $x \log x = 0$ for $x=0$.
\end{itemize}

\section{Model-free formulae for the implied volatility}\label{S:modelfree}
Let $X_t$ be a non-negative martingale on a filtered probability space $(\Omega,\mathcal{F},\left\{\mathcal{F}_t\right\}_{t\ge 0},\mathbb{P})$. Consider a stochastic model where the process $X$ models the price dynamics of an asset. Define the call and put pricing functions in the price model described above by
\begin{equation}
C(T,K)=\mathbb{E}[(X_T-K)^{+}]\quad\mbox{and}\quad P(T,K)=\mathbb{E}[(K-X_T)^{+}],
\label{E:putp}
\end{equation}
respectively. Here $T> 0$ is the maturity, while $K> 0$ is the strike price.

The implied volatility $(T,K)\mapsto I(T,K)$ is determined from the following equality:
$$
C(K,T)=C_{BS}(T,K,\sigma=I(T,K)),
$$ 
where the symbol $C_{BS}$ stands for the Black-Scholes call pricing function.
In the sequel, the maturity $T$ will be fixed, and the implied volatility will be considered as a function of only the strike price.

We will next formulate two model-free asymptotic formulas, 
characterizing the left-wing behavior of the implied volatility in terms of the put pricing function. These formulas will be needed below. Suppose the initial condition for the price process is $X_0=1$. Suppose also that the asset price model does not have atoms at zero. The previous assumption means that $\mathbb{P}(X_T=0)=0$. Then the following asymptotic formula (a zero order formula for the implied volatility) holds:
\begin{align}
I(K)&=\frac{\sqrt{2}}{\sqrt{T}}\sqrt{\log\frac{1}{\widetilde{P}(K)}-\frac{1}{2}\log\log\frac{K}
{\widetilde{P}(K)}}
-\frac{\sqrt{2}}{\sqrt{T}}\sqrt{\log\frac{K}{\widetilde{P}(K)}-\frac{1}{2}\log\log\frac{K}
{\widetilde{P}(K)}} \nonumber \\
&+O\left(\left(\log\frac{K}{\widetilde{P}(K)}\right)^{-\frac{1}{2}}\right)
\label{E:assy8}
\end{align}
as $K\rightarrow 0$. Here $\widetilde{P}$ is a positive function 
satisfying the condition $P(K)\approx\widetilde{P}(K)$ as $K\rightarrow 0$. Formula (\ref{E:assy8}) was established in \cite{gulisashvili.10} (see also Theorem 9.29 in \cite{gulisashvili.12b}). The fact that the absence of atoms is a necessary condition for the validity of formula
(\ref{E:assy8}) was noticed in \cite{de2013shapes} (see also
\cite{gulisashvili.13}).  

The next asymptotic formula (a first-order formula for the implied volatility) 
can be easily deduced from the results formulated in \cite[Sections 9.6 and 9.9]{gulisashvili.12b}:
\begin{align}
I(K)&=\frac{\sqrt{2}}{\sqrt{T}}\sqrt{
\log\frac{1}{P(K)}-\frac{1}{2}\log\log\frac{K}{P(K)}
+\log B(K)} \nonumber \\
&\quad-\frac{\sqrt{2}}{\sqrt{T}}\sqrt{\log\frac{K}{P(K)}
-\frac{1}{2}\log\log\frac{K}{P(K)}
+\log B(K)} \nonumber \\
&\quad+O\left(\log\log\frac{K}{P(K)}\left(\log\frac{K}{P(K)}\right)^{-\frac{3}{2}}\right)
\label{E:imp6}
\end{align}
as $K\rightarrow 0$,
where
\begin{equation}
B(K)=\frac{\sqrt{\log\frac{1}{P(K)}}
-\sqrt{\log\frac{K}{P(K)}}}
{2\sqrt{\pi}\sqrt{\log\frac{1}{P(K)}}}.
\label{E:B}
\end{equation}
Formula (\ref{E:imp6}) takes into account the results obtained in
\cite{gao.lee.11}. It provides more terms in the asymptotic expansion of the implied volatility at small strikes than
formula (\ref{E:assy8}) with $\widetilde{P}=P$. More information on model free formulas for the implied volatility can be found in \cite{gulisashvili.12b}.

\section{Basket options in multidimensional Black-Scholes model}\label{S:nBS}
Our goal in the present section is to characterize the asymptotic behavior of the implied volatility
at small strikes in the case of a basket option of European style in the n-dimensional driftless Black-Scholes model. We assume that the interest rate is equal to zero. Let $S^1,\dots,S^n$ be a basket of assets such that
$$
\log\widetilde{S}_t=\log\widetilde{S}_0-\frac{\rm diag(\cm)t}{2}+\cm^{\frac{1}{2}}W_t,
$$
where $\widetilde{S}_t=(S^1_t,\dots,S^n_t)$, $\widetilde{S}_0=(S^1_0,\dots,S^n_0)$, $W$ is an n-dimensional standard Brownian motion, $\cm$ is the covariance matrix,
and $\rm diag(\cm)$ stands for the main diagonal of $\cm$. We denote by $(\lambda_1,\dots,\lambda_n)\in \Delta_n$
the weight vector associated with the assets in the basket. 

Consider the price process of the following form:
\begin{equation}
S_t = \sum_{i=1}^n \lambda_i S^i_t,\quad t\ge 0.
\label{E:pricep}
\end{equation}
The initial condition for the process $S$ is given by
$S_0= \sum_{i=1}^n \lambda_i S^i_0$, and we will assume in the sequel that $S^i_0=1$ for all $1\le i\le n$. The previous condition implies that $S_0=1$.
Therefore,
$S_t=\sum_{i=1}^n\exp\{Y_t^i\}$,
where
\begin{equation}
Y_t^i=\log\lambda_i-\frac{b_{ii}t}{2}+\sum_{j=1}^n\beta_{ij}W_t^j,\quad 1\le i\le n.
\label{E:pr1}
\end{equation}
In (\ref{E:pr1}), the symbols $\beta_{ij}$ stand for the elements of the matrix $\cm^{\frac{1}{2}}$. We also set
\begin{equation}
\mu_{i,t}=\log\lambda_i-\frac{b_{ii}t}{2},\quad 1\le i\le n.
\label{E:weights}
\end{equation}
It is clear that the following equality holds: $\exp\{Y_t^i\}=\lambda_iS_t^i$, $t> 0$, $1\le i\le n$.
\subsection{Asymptotics of put pricing functions in multidimensional
  Black-Scholes model}\label{asBS}
The
distribution density of the random variable $S_T$ will be denoted by
$p_T$. An asymptotic formula for $p_T$ was recently established in
\cite{gulisashvili.tankov.13}. Let us briefly recall the notation used
in that paper. Let $\bar w\in \Delta_n$ be the unique vector such that 
\begin{align}
\bar w^\perp \cm \bar w = \min_{w\in \Delta_n} w^\perp \cm w. \label{minprob}
\end{align}
The existence and uniqueness of $\bar w$ follows from the non-degeneracy of the matrix
$\cm$. 
We
let 
\begin{align}
\bar n:= \text{Card}\,\{i=1,\dots,n: \bar w_i \neq 0\},\label{nbar.eq}
\end{align}
$$
\bar I:= \{i=1,\dots,n: \bar w_i \neq 0\}:= \{\bar k(1),\dots,\bar
k(\bar n)\},
$$ 
$\bar \mu\in \mathbb R^{\bar n}$ with $\bar \mu_i = \mu_{\bar k(i)}$,
and $\bar \cm \in M_{\bar n}(\mathbb R)$ with $\bar \cm_{ij} = \cm_{\bar
k(i),\bar k(j)}$. The inverse matrix of $\bar\cm$ is denoted by $\bar
\cm^{-1}$  and its elements and row sums by $\bar a_{ij}$ and $\bar A_k := \sum_{j=1}^{\bar n}
\bar a_{kj}$. 
We refer the interested reader to
\cite{gulisashvili.tankov.13} for more details and explanations on
this notation. 

It was established in
\cite{gulisashvili.tankov.13} that under a special restriction on the correlation matrix $\cm$ (Assumption $(\mathcal A)$ in \cite{gulisashvili.tankov.13}), the following asymptotic formula is valid as $x\to 0$: 
\begin{align}
p_T(x)& =  C_T\left(\log\frac{1}{x}\right)^{\frac{1-\bar{n}}{2}}
x^{-1+\frac{1}{T}\sum_{k=1}^{\bar{n}}
\bar{A}_k\left(\log\frac{\bar{A}_1+\cdots+\bar{A}_{\bar{n}}}{\bar A_k}+\bar{\mu}_{k,T}\right)} \nonumber \\
&\quad\exp\left\{-\frac{1}{2T}(\bar A_1+\cdots+\bar A_{\bar{n}})\log^2\frac{1}{x}\right\} \left(1+O\left(\left(\log
      \frac{1}{x} \right)^{-1}\right)\right),
      \label{E:forr1}
\end{align}
where the constant $C$ is given by 
\begin{align}
&C_T=\frac{1}{\sqrt{2\pi T}\sqrt{\left|\bar\cm\right|}}\frac{\sqrt{\bar
    A_1+\cdots+\bar A_{\bar n}}}{\sqrt{\bar A_1\cdots \bar A_{\bar n}}}
\nonumber \\
&\exp\left\{-\frac{1}{2T}\sum_{ i,j=1}^{\bar n} \bar
  a_{ij}\left(\log\frac{\bar A_1+\cdots+\bar A_{\bar n}}{\bar
      A_i}+\bar \mu_{i,T}\right)
\left(\log\frac{\bar A_1+\cdots+\bar A_{\bar n}}{\bar A_j}+\bar \mu_{j,T}\right)\right\}.
\label{E:for1}
\end{align}
Using formula (\ref{E:forr1}), we can characterize the asymptotic behavior of the put pricing function $P$ at small strikes. This can be done as follows. Consider the fractional integral of order two defined by
\begin{equation}
F_2M(\sigma)=\int_{\sigma}^{\infty}(\tau-\sigma)M(\tau)d\tau,
\label{E:A}
\end{equation}
where $M$ is a positive function on $(0,\infty)$. Since
$$
P(K)=\int_0^K(K-x)p_T(x)dx,
$$
it is not hard to see that
\begin{equation}
P(K)=S^{-1}F_2M(S),\quad\mbox{where}\quad S=K^{-1}\quad\mbox{and}\quad M(y)=y^{-3}
p_T\left(y^{-1}\right).
\label{E:iv1}
\end{equation}
Using (\ref{E:forr1}), we get
\begin{align}
&M(y)=C_T\left(\log y\right)^{\frac{1-\bar{n}}{2}}
y^{-2-T^{-1}\sum_{k=1}^{\bar{n}}\bar{A}_k\left(\log\frac{\bar{A}_1+\cdots+\bar{A}_{n}}
{\bar{A}_k}+\mu_{k,T}\right)} \nonumber \\
&\quad\exp\left\{-\frac{1}{2T}(\bar{A}_1+\cdots+\bar{A}_{\bar n})\log^2y\right\} 
\left(1+O\left(\left(\log y\right)^{-1}\right)\right)
\label{E:ii}
\end{align}
as $y\rightarrow\infty$, where $C_T$ is given by (\ref{E:for1}).

In \cite{gulisashvili2010asymptotic}, a general asymptotic formula was obtained for fractional integrals (see also Theorem 5.3 in \cite{gulisashvili.12b}).
We will next formulate this general result. Suppose 
$$
M(y)=a(y)e^{-b(y)}\quad\mbox{for all}\quad y\ge c
$$
where $c> 0$ is some number. Suppose also that the following conditions hold:
\begin{enumerate}
\item 
$y|a^{\prime}(y)|\le\gamma a(y)$ for some $\gamma> 0$ and all $y> c$.
\item 
$b(y)=B(\log y)$, where $B$ is a positive increasing function on $(c,\infty)$
such that $B^{\prime\prime}(y)\approx 1$ as $y\rightarrow\infty$.
\end{enumerate}
Then as $\sigma\rightarrow\infty$,
\begin{equation}
F_2M(\sigma)=\frac{M(\sigma)}{b^{\prime}(\sigma)^2}(1+O((\log\sigma)^{-1})).
\label{E:iii}
\end{equation}

The function $M$ in (\ref{E:iv1}) satisfies the conditions formulated above. Next, using (\ref{E:iv1}), 
(\ref{E:ii}), and (\ref{E:iii}) with 
$$
B(u)=\frac{1}{2T}(\bar{A}_1+\cdots+\bar{A}_{\bar n})u^2,
$$
we establish the following assertion.
\begin{theorem}\label{P:eshchio}
Let $P$ be the price of the put option defined in (\ref{E:putp}), and suppose Assumption ($\mathcal{A}$) holds for the covariance matrix 
$\cm$ (see \cite{gulisashvili.tankov.13}). Then, as $K\rightarrow 0$,
\begin{align}
P(K)&=\delta_0
\left[\log\frac{1}{K}\right]^{\delta_1}\left(\frac{1}{K}\right)^{\delta_2}
\exp\left\{-\delta_3\log^2\frac{1}{K}\right\}
\left(1+O\left(\left(\log\frac{1}{K}\right)^{-1}\right)\right),
\label{E:imp1}
\end{align}
where
$$
\delta_0=\frac{C_TT^2}{\left(\bar{A}_1+\cdots+\bar{A}_{\bar{n}}\right)^2},
\quad\delta_1=-\frac{3+\bar{n}}{2},
$$
$$
\delta_2=-1-\frac{1}{T}\sum_{k=1}^{\bar{n}}\bar{A}_k\left(\log\frac{\bar{A}_1+\cdots+\bar{A}_{\bar{n}}}{\bar{A}_k}+\mu_{k,T}\right),
\quad\delta_3=\frac{1}{2T}(\bar{A}_1+\cdots+\bar{A}_{\bar{n}}),
$$
and $C_T$ is given by (\ref{E:for1}).
\end{theorem}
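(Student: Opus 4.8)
The plan is to combine the fractional-integral representation (\ref{E:iv1}) of the put price with the asymptotics (\ref{E:ii}) of the integrand $M$ and the general fractional-integral formula (\ref{E:iii}). Write $\Lambda := \bar A_1 + \cdots + \bar A_{\bar n}$ and decompose $M(y) = a(y)e^{-b(y)}$ with $b(y) = \frac{\Lambda}{2T}\log^2 y$, so that $a(y) = M(y)e^{b(y)}$ collects the remaining factors appearing in (\ref{E:ii}): the constant $C_T$, the power $(\log y)^{(1-\bar n)/2}$, the power of $y$ with exponent $-2 - T^{-1}\sum_{k} \bar A_k(\log(\Lambda/\bar A_k) + \mu_{k,T})$, and the correction $1 + O((\log y)^{-1})$. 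Condition (2) of the general theorem holds with $B(u) = \frac{\Lambda}{2T}u^2$, since this $B$ is positive and increasing for large $u$ and $B'' \equiv \Lambda/T$ is a positive constant, hence $B'' \approx 1$.

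The delicate step is condition (1), the bound $y|a'(y)| \le \gamma a(y)$. Since $a'(y)/a(y) = M'(y)/M(y) + b'(y)$ and $y\,b'(y) = (\Lambda/T)\log y$ is unbounded, boundedness of $y\,a'(y)/a(y)$ requires a cancellation: one needs $y\,M'(y)/M(y) = -(\Lambda/T)\log y + O(1)$, equivalently $x\,p_T'(x)/p_T(x) = (\Lambda/T)\log\frac1x + O(\log\log\frac1x)$ as $x \to 0$. This follows by differentiating the expansion (\ref{E:forr1}), which is legitimate because it is obtained by Laplace's method applied to a Gaussian integral and may therefore be differentiated in the relevant parameter; the leading terms of $y\,b'(y)$ and $y\,M'(y)/M(y)$ then cancel, leaving $y\,a'(y)/a(y) = -3 + o(1)$. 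Verifying that $M$ genuinely meets these hypotheses — in particular controlling $p_T'/p_T$ and not merely $p_T$ — is the main obstacle; the rest is bookkeeping.

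Granting the hypotheses, I would apply (\ref{E:iii}). With the above $b$, $b'(\sigma)^2 = (\Lambda^2/T^2)\,\sigma^{-2}\log^2\sigma$, so
\[
F_2M(\sigma) = \frac{T^2}{\Lambda^2}\,\frac{\sigma^2}{\log^2\sigma}\,M(\sigma)\bigl(1 + O((\log\sigma)^{-1})\bigr), \qquad \sigma\to\infty .
\]
Inserting (\ref{E:ii}) for $M(\sigma)$, the factor $\sigma^2$ lowers the power of $\sigma$ by two and $(\log\sigma)^{(1-\bar n)/2 - 2} = (\log\sigma)^{-(3+\bar n)/2}$ appears. Finally I would set $\sigma = S = K^{-1}$ in $P(K) = S^{-1}F_2M(S)$: the prefactor $S^{-1}\cdot S^2 = K^{-1}$ shifts the power of $1/K$ by one, and collecting all terms yields the constant $\delta_0 = C_T T^2/\Lambda^2$, the power $\delta_1 = -(3+\bar n)/2$ of $\log\frac1K$, the power $\delta_2 = -1 - T^{-1}\sum_k \bar A_k(\log(\Lambda/\bar A_k) + \mu_{k,T})$ of $1/K$, and the Gaussian factor $\exp\{-\delta_3\log^2\frac1K\}$ with $\delta_3 = \Lambda/(2T)$; the errors from (\ref{E:ii}) and (\ref{E:iii}) combine into the asserted $1 + O((\log\frac1K)^{-1})$, establishing (\ref{E:imp1}).
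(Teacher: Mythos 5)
Your proposal is correct and follows essentially the same route as the paper: the representation $P(K)=S^{-1}F_2M(S)$ from (\ref{E:iv1}), the asymptotics (\ref{E:ii}) for $M$, and the general fractional-integral formula (\ref{E:iii}) with $B(u)=\frac{1}{2T}(\bar A_1+\cdots+\bar A_{\bar n})u^2$, with the same bookkeeping producing $\delta_0,\delta_1,\delta_2,\delta_3$. The condition-(1) issue you flag is not addressed in the paper either (it simply asserts the hypotheses of (\ref{E:iii}) hold), and rather than differentiating the expansion (\ref{E:forr1}) — which is not automatically legitimate — it is cleaner to apply (\ref{E:iii}) to the explicit leading-order expression in (\ref{E:ii}), for which $y|a'(y)|\le\gamma a(y)$ is immediate, and then transfer the multiplicative $1+O((\log y)^{-1})$ factor through the positive operator $F_2$ by monotonicity.
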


Formula (\ref{E:imp1}) will be used in the next subsection to characterize the left-wing behavior of the implied volatility associated with a basket option in the multidimensional Black-Scholes model.
\subsection{Left-wing asymptotic behavior of the implied volatility associated with basket options}\label{S:leftw}
The next statement characterizes the asymptotic behavior of the implied volatility for small strikes. 
\begin{theorem}\label{P:esh}
Suppose Assumption $(\mathcal{A})$ holds for the covariance matrix 
$\cm$. Then, as $K\rightarrow 0$,
\begin{align}
&I(K)=\frac{1}{\sqrt{\bar{A}_1+\cdots+\bar{A}_{\bar{n}}}}-\frac{2\sum_{k=1}^{\bar{n}}\bar{A}_k
\left(\log\frac{\bar{A}_1+\cdots+\bar{A}_{\bar{n}}}{\bar{A}_k}+\mu_{k,T}\right)+T}
{2(\bar{A}_1+\cdots+\bar{A}_{\bar{n}})^{\frac{3}{2}}}\left(\log\frac{1}{K}\right)^{-1}
\nonumber \\
&\quad-\frac{T(\bar{n}-1)}{2(\bar{A}_1+\cdots+\bar{A}_{\bar{n}})
^{\frac{3}{2}}}\log\log\frac{1}{K}\left(\log\frac{1}{K}\right)^{-2}+O\left(\left(\log\frac{1}{K}\right)^{-2}\right).
\label{E:seccund}
\end{align}
\end{theorem}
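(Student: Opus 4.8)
The plan is to substitute the small-strike expansion of the put price obtained in Theorem~\ref{P:eshchio} into the first-order model-free formula \eqref{E:imp6}, and then expand. This is legitimate: in the multidimensional Black--Scholes model $S_T=\sum_{i=1}^n\exp\{Y_T^i\}$ is a finite sum of positive log-normal variables, so $\mathbb P(S_T=0)=0$ and $S_T$ carries no atom at the origin; moreover, for $K<1$ one has $\log\frac1{P(K)}>\log\frac K{P(K)}\to\infty$, hence $B(K)>0$ in \eqref{E:B} and both expressions under the square roots in \eqref{E:imp6} are eventually positive. Throughout, write $u:=\log\frac1K\to\infty$ as $K\to0$ and $A:=\bar A_1+\cdots+\bar A_{\bar n}$, and recall that $\delta_3=\frac{A}{2T}$.

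First I would restate \eqref{E:imp1} in the form needed for \eqref{E:imp6}. Taking logarithms, $\log\frac1{P(K)}=\delta_3u^2-\delta_2u-\delta_1\log u-\log\delta_0+O(u^{-1})$, whence $\log\frac K{P(K)}=\log\frac1{P(K)}-u$ and $\log\log\frac K{P(K)}=2\log u+\log\delta_3+O(u^{-1})$; and since $\sqrt{\log\frac1{P(K)}}-\sqrt{\log\frac K{P(K)}}=u\bigl(\sqrt{\log\frac1{P(K)}}+\sqrt{\log\frac K{P(K)}}\bigr)^{-1}$, one gets from \eqref{E:B} that $B(K)=\frac{1}{4\sqrt\pi\,\delta_3u}(1+O(u^{-1}))$, i.e. $\log B(K)=-\log u+\mathrm{const}+O(u^{-1})$. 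Plugging these into the two radicands of \eqref{E:imp6} and using the identity $\delta_1+2=\frac{1-\bar n}{2}$, I would check that both radicands have the form $\delta_3u^2-cu+\frac{\bar n-1}{2}\log u+c_1+O(u^{-1})$ with the \emph{same} constant $c_1$, the first one with $c=\delta_2$ and the second with $c=\delta_2+1$.

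Next I would Taylor-expand. Writing each radicand as $\delta_3u^2(1+x)$ with $x=O(u^{-1})$ and applying $\sqrt{1+x}=1+\frac x2-\frac{x^2}8+O(x^3)$ yields an expansion of each square root in decreasing powers of $u$ (with $\log u$ factors), and then
\[
\sqrt{\text{first}}-\sqrt{\text{second}}=\frac{u+O(u^{-1})}{\sqrt{\text{first}}+\sqrt{\text{second}}}
=\frac1{2\sqrt{\delta_3}}\Bigl(1+\frac{2\delta_2+1}{4\delta_3u}+\frac{(1-\bar n)\log u}{4\delta_3u^2}+O(u^{-2})\Bigr).
\]
Multiplying by $\frac{\sqrt2}{\sqrt T}$, using $\frac{\sqrt2}{\sqrt T}\cdot\frac1{2\sqrt{\delta_3}}=\frac1{\sqrt A}$ together with $\frac1{4\delta_3}=\frac T{2A}$ and the explicit values $\delta_2=-1-\frac1T\sum_k\bar A_k(\log\frac{A}{\bar A_k}+\mu_{k,T})$ and $\delta_1=-\frac{3+\bar n}2$, the three displayed terms become precisely the three terms of \eqref{E:seccund}. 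Finally, the remainder in \eqref{E:imp6} is $O\bigl(\log\log\frac K{P(K)}\,(\log\frac K{P(K)})^{-3/2}\bigr)=O(u^{-3}\log u)$, and the $O(u^{-1})$ error carried over from \eqref{E:imp1} enters each radicand additively and hence contributes only $O(u^{-2})$ to the square roots; both are absorbed into $O((\log\frac1K)^{-2})$.

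The routine part is matching the constants; the one place that needs care is the order bookkeeping in the Taylor expansion — one must follow each radicand through the constant level $c_1$ (which gathers $\log\delta_0$, $\frac12\log\delta_3$, the constant in $\log B(K)$, and the $-x^2/8$ corrections) only far enough to confirm that $c_1$ feeds solely into the $O(u^{-2})$ remainder, while keeping the coefficient of $\log u$ exact and making sure no $u^{-2}\log u$ contribution is lost when dividing by $\sqrt{\text{first}}+\sqrt{\text{second}}$. I expect this order-accounting to be essentially the only obstacle.
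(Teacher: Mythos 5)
Your proposal is correct and follows essentially the same route as the paper's own proof: substitute the put-price asymptotics of Theorem~\ref{P:eshchio} into the first-order model-free formula \eqref{E:imp6}, show $\log B(K)=-\log\log\frac1K+\mathrm{const}+O\bigl((\log\frac1K)^{-1}\bigr)$, and Taylor-expand the two square roots, with the same error bookkeeping (the $O(u^{-1})$ terms and the remainder of \eqref{E:imp6} both absorbed into $O(u^{-2})$). Writing the difference of the roots as the difference of the radicands over their sum, instead of the paper's factorization $\sqrt{\delta_3}\,u\bigl[\sqrt{1+h_1}-\sqrt{1+h_2}\bigr]$, is only a cosmetic variant of the same computation, and your constants match \eqref{E:seccund}.
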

\begin{remark}
The leading term in the implied volatility expression above can also
be written as
$$
\lim_{K\downarrow 0} I(K) =
\frac{1}{\sqrt{\bar{A}_1+\cdots+\bar{A}_{\bar{n}}}} = \sqrt{\min_{w\in
  \Delta _n} w^\perp \cm w}. 
$$
\end{remark}
\begin{proof}
It follows from (\ref{E:imp1}) that as $K\rightarrow 0$,
\begin{align}
\log\frac{1}{P(K)}&=\log\frac{1}{\delta_0}-\delta_1\log\log\frac{1}{K}
-\delta_2\log\frac{1}{K}+\delta_3\log^2\frac{1}{K} \nonumber \\
&+O\left(\left(\log\frac{1}{K}\right)^{-1}\right)
\label{E:sor1}
\end{align}
and
\begin{align}
\log\frac{K}{P(K)}&=\log\frac{1}{\delta_0}
-\delta_1\log\log\frac{1}{K}
-(\delta_2+1)\log\frac{1}{K} \nonumber \\
&+\delta_3\log^2\frac{1}{K}+O\left(\left(\log\frac{1}{K}\right)^{-1}\right)
\label{E:sor2}
\end{align}
where $\delta_0$, $\delta_1$, $\delta_2$, and $\delta_3$ are such as in Theorem 
\ref{P:eshchio}. Moreover, the error 
term in (\ref{E:imp6}) can be represented as follows:
\begin{equation}
O\left(\log\log\frac{1}{K}\left(\log\frac{1}{K}\right)^{-3}\right).
\label{E:sor33}
\end{equation}

We will next characterize the asymptotic behavior of $\log B(K)$ as $K\rightarrow 0$. Denote the functions
on the right-hand side of (\ref{E:sor1}) and (\ref{E:sor2}) by $V_1(K)$ and $V_2(K)$, respectively.
Then, using (\ref{E:B}), (\ref{E:sor1}), and (\ref{E:sor2}), we obtain
$$
\log B(K)=\log\frac{1}{2\sqrt{\pi}}+\log\left[1-\sqrt{1-\frac{V_1(K)-V_2(K)}{V_1(K)}}\right].
$$

It is easy to see that $\log(1-\sqrt{1-h})=\log\frac{h}{2}+O(h)$ as $h\rightarrow 0$. Put 
$h=\frac{V_1(K)-V_2(K)}{V_1(K)}$. Then we have
$$
\log B(K)=\log\frac{1}{2\sqrt{\pi}}+\log\frac{V_1(K)-V_2(K)}{2V_1(K)}
+O\left(\left(\log\frac{1}{K}\right)^{-1}\right), 
$$
and hence
\begin{equation}
\log B(K)=\log\frac{1}{4\sqrt{\pi}\delta_3}-\log\log\frac{1}{K}
+O\left(\left(\log\frac{1}{K}\right)^{-1}\right)
\label{E:sor3}
\end{equation}
as $K\rightarrow 0$.

Our next goal is to simplify formula (\ref{E:imp6}) by taking into account (\ref{E:sor1}), (\ref{E:sor2}),
and (\ref{E:sor3}), and replacing the error term by the expression in (\ref{E:sor33}). We can drop the terms 
$O\left(\left(\log\frac{1}{K}\right)^{-1}\right)$ in (\ref{E:sor1}), (\ref{E:sor2}), and (\ref{E:sor3}), using the mean value
theorem. This will introduce an error term $O\left(\left(\log\frac{1}{K}\right)^{-2}\right)$ in
the formula that follows from formula (\ref{E:imp6}). Thus
\begin{align}
&I(K)
=\frac{\sqrt{2}}{\sqrt{T}}
\sqrt{\widetilde{V}_1(K)-\frac{1}{2}\log\widetilde{V}_2(K)
+\log\frac{1}{4\sqrt{\pi}\delta_3}-\log\log\frac{1}{K}} \nonumber \\
&\quad-\frac{\sqrt{2}}{\sqrt{T}}\sqrt{\widetilde{V}_2(K)
-\frac{1}{2}\log\widetilde{V}_2(K)
+\log\frac{1}{4\sqrt{\pi}\delta_3}-\log\log\frac{1}{K}} \nonumber \\
&\quad+O\left(\left(\log\frac{1}{K}\right)^{-2}\right)
\label{E:karr}
\end{align}
as $K\rightarrow 0$, where $\widetilde{V}_1(K)$ and $\widetilde{V}_2(K)$ denote the functions on the right-hand side of (\ref{E:sor1}) and (\ref{E:sor2}), respectively, without the terms 
$O\left(\left(\log\frac{1}{K}\right)^{-1}\right)$. Next, using the mean value theorem, we see that
it is possible to replace $\widetilde{V}_2(K)$ in the expression $\log\widetilde{V}_2(K)$ in
formula (\ref{E:karr}) by $\delta_3\log^2K$. Now, taking into account the definitions of $\widetilde{V}_1(K)$ and $\widetilde{V}_2(K)$,
we obtain
\begin{align}
&I(K)=\frac{\sqrt{2}}{\sqrt{T}}\sqrt{-\log\left[4\sqrt{\pi}\delta_0
\delta_3^{\frac{3}{2}}\right]-(\delta_1+2)\log\log\frac{1}{K}-\delta_2\log\frac{1}{K}
+\delta_3\log^2\frac{1}{K}} \nonumber \\
&-\frac{\sqrt{2}}{\sqrt{T}}\sqrt{-\log\left[4\sqrt{\pi}\delta_0
\delta_3^{\frac{3}{2}}\right]-(\delta_1+2)\log\log\frac{1}{K}-(\delta_2+1)\log\frac{1}{K}
+\delta_3\log^2\frac{1}{K}} \nonumber \\
&+O\left(\left(\log\frac{1}{K}\right)^{-2}\right)
\label{E:zabyl}
\end{align}
as $K\rightarrow 0$. Put 
$$
h_1(K)=\frac{-\log\left[4\sqrt{\pi}\delta_0
\delta_3^{\frac{3}{2}}\right]-(\delta_1+2)\log\log\frac{1}{K}-\delta_2\log\frac{1}{K}}
{\delta_3\log^2\frac{1}{K}}
$$
and
$$
h_2(K)=\frac{-\log\left[4\sqrt{\pi}\delta_0
\delta_3^{\frac{3}{2}}\right]-(\delta_1+2)\log\log\frac{1}{K}-(\delta_2+1)\log\frac{1}{K}}
{\delta_3\log^2\frac{1}{K}}.
$$
It follows from (\ref{E:zabyl}) that
\begin{align}
I(K)&=\frac{\sqrt{2}\sqrt{\delta_3}}{\sqrt{T}}\log\frac{1}{K}
\left[\sqrt{1+h_1(K)}-\sqrt{1+h_2(K)}\right] 
+O\left(\left(\log\frac{1}{K}\right)^{-2}\right)
\label{E:secco}
\end{align}
as $K\rightarrow 0$. Next, using the formula 
$\sqrt{1+h}=1+\frac{1}{2}h-\frac{1}{8}h^2+O(h^3)$
as $h\rightarrow 0$ in (\ref{E:secco}), we get
\begin{align}
&I(K)=\frac{1}{\sqrt{2T\delta_3}}
+\frac{1+2\delta_2}{4\delta_3\sqrt{2T\delta_3}}
\left(\log\frac{1}{K}\right)^{-1} \nonumber +\frac{\delta_1+2}{2\delta_3\sqrt{2T\delta_3}}\log\log\frac{1}{K}\left(\log\frac{1}{K}\right)^{-2} 
\nonumber \\
&+O\left(\left(\log\frac{1}{K}\right)^{-2}\right)
\label{E:seccond}
\end{align}
as $K\rightarrow 0$. Finally, plugging the values of $\delta_1$, $\delta_2$, and $\delta_3$ given
in Theorem \ref{P:eshchio} into formula (\ref{E:seccond}), we obtain formula (\ref{E:seccund}).

This completes the proof of Theorem \ref{P:esh}.
\end{proof}

\begin{remark}[Implied volatility in the multidimensional
  Black-Scholes model for large strikes]
From Theorem 1 in \cite{asmussen2008asymptotics}, it follows that 
$$
\mathbb P[S_t \geq K] \sim \frac{m_n\sigma\sqrt{t}}{\sqrt{2\pi} \log K}
\exp\left\{-\frac{(\log K - \mu)^2}{2\sigma^2 t}\right\},\quad K\to \infty,
$$
where $\sigma^2 = \max_{k=1,\dots,n} \cm_{kk}$, $\mu = \max_{\mu_{k,t}:
  \cm_{kk} = \sigma^2}$ and $m_n = \#\{k : \cm_{kk} = \sigma^2, \mu_{k,t}
= \mu\}$. From this result, we easily deduce that 
$$
\mathbb E[(S_t-K)^+]\approx \frac{K}{\log^2 K} \exp\left\{-\frac{(\log K - \mu)^2}{2\sigma^2 t}\right\},\quad K\to \infty.
$$
Applying Corollary 2.4 in \cite{gulisashvili.10} (which is nothing but
the
right-tail version of formula \eqref{E:assy8}), we conclude that 
$$
I(K) = \sigma + O\left(\frac{\psi(K)}{\log K}\right)
$$
as $K \to +\infty$, where $\psi$ is any function satisfying
$\psi(K)\to +\infty$ as $K\to +\infty$. 
\end{remark}
\subsection{The case where $n=2$}\label{SS:n2}
The detailed discussion of the behavior of the distribution of the sum
of two log-normal variables can be found in \cite{gao2009asymptotic}
  and \cite{gulisashvili.tankov.13}. The covariance matrix in this case is as follows: $\cm=[b_{ij}]$, where
$b_{11}=\sigma_1^2$, $b_{12}=b_{21}=\rho\sigma_1\sigma_2$, $b_{22}=\sigma_2^2$ with $\sigma_1> 0$, $\sigma_2> 0$, and 
the correlation coefficient satisfies $-1<\rho< 1$. We will also assume $\sigma_1\geq
\sigma_2$. 
Note that the case where $\rho<\frac{\sigma_2}{\sigma_1}$ is a regular case, and Assumption
($\mathcal{A}$) holds. In the case where $\rho>\frac{\sigma_2}{\sigma_1}$, we have to rearrange the rows and the columns of $\cm$ (see the example in Section 2.1 of \cite{gulisashvili.tankov.13}). Then $\bar\cm=(\sigma_2^2)$, and Assumption ($\mathcal{A}$) holds. The case where $\rho=\frac{\sigma_2}{\sigma_1}$ is exceptional. Here Assumption ($\mathcal{A}$) does not hold. 

The following asymptotic formulas for the implied volatility follow from \eqref{E:seccund}:
\begin{itemize}
\item Suppose $\rho > \frac{\sigma_2}{\sigma_1}$. Then 
\begin{equation}
I(K) = \sigma_2 -
  \sigma_2\log\lambda_2
  \left(\log\frac{1}{K}\right)^{-1} + O\left(\left(\log\frac{1}{K}\right)^{-2}\right)
\label{E:assy1}
\end{equation}
as $K\rightarrow 0$.
\item Suppose $\rho <\frac{\sigma_2}{\sigma_1}$. Then
\begin{align}
I(K) =&\sigma_\infty -\sigma_\infty \Bigg(\frac{T}{2}\sigma_\infty^2 +\left[\log\lambda_1 - 
  \frac{\sigma^2_1 T}{2} - \log \bar v\right]\bar v \nonumber \\
  &+\left[\log\lambda_2 - \frac{\sigma^2_2 T}{2} - \log(1-\bar v)\right](1-\bar v) \Bigg)\left(\log \frac{1}{K}\right)^{-1} \nonumber \\
& - \frac{T}{2}\sigma_\infty^3 \frac{\log\log \frac{1}{K}}{\log^2
  \frac{1}{K}} + O\left(\left(\log\frac{1}{K}\right)^{-2}\right)
  \label{E:assy2}
\end{align}
as $K\rightarrow 0$, where 
$$
\sigma_\infty = \frac{\sigma_1 \sigma_2 \sqrt{1- \rho^2} }{\sqrt{\sigma_1^2 + \sigma_2^2 - 2\rho
  \sigma_1\sigma_2}}\quad\mbox{and}\quad\bar v=\frac{\sigma_2(\sigma_2 -
  \rho\sigma_1)}{\sigma_1^2 + \sigma_2^2 -2\rho\sigma_1 \sigma_2}.
$$
\end{itemize}
Therefore, the behavior of the implied volatility experiences a
qualitative change (phase transition) at 
$\rho^*= \frac{\sigma_2}{\sigma_1}$. Indeed, for $\rho<\rho^*$, the expression in formula (\ref{E:assy2}),
approximating the left wing of the implied
volatility, depends on the correlation coefficient, while for $\rho>\rho^*$ the
left wing is approximated by a correlation-independent expression (see (\ref{E:assy1})). 

We will next discuss the asymptotic behavior of the implied volatility in the exceptional case 
where $n=2$ and $\rho=\rho^{*}$. The following formula holds for the distribution density $p_T$ in the exceptional case 
(see \cite{gao2009asymptotic}):
\begin{align}
&p_T(x)\approx x^{\frac{\mu_{2,T}}{T\sigma_2^2}-1}\left(\log\frac{1}{x}\right)^{-\frac{1}{T(\sigma_1^2-\sigma_2^2)}}
\left(\log\log\frac{1}{x}\right)^{-\frac{1}{2}} \nonumber \\
&\exp\left\{-\frac{1}{2T\left(\sigma_1^2-\sigma_2^2\right)}\left[\log\left(\frac{1}{\rho^2}-1\right)
+\log\log\frac{1}{x}-\log\left(\log\left(\frac{1}{\rho^2}-1\right)+\log\log\frac{1}{x}\right)+\mu_{1,T}-\mu_{2,T}
\right]^2\right\} \nonumber \\
&\exp\left\{-\frac{\log^2\frac{1}{x}}{2T\sigma_2^2}\right\}
\label{E:assy3}
\end{align}
as $x\rightarrow 0$. Recall that we assume that $\mu=0$. Recall also that $\mu_{1,T}$ and $\mu_{2,T}$ are defined in (\ref{E:weights}). 
\begin{remark}\label{R:o1} \rm Formula (\ref{E:assy3}) can be derived from formula (B20) established at the end of the proof of part (ii) of Theorem 2.3 in \cite{gao2009asymptotic}. Note that in the present paper we assume $\sigma_1\ge\sigma_2$, while in \cite{gao2009asymptotic}, $\sigma_1\le\sigma_2$. 
\end{remark}

Set
\begin{equation}
V_{1,T}=\log\left(\frac{1}{\rho^2}-1\right)+\mu_{1,T}-\mu_{2,T}\quad\mbox{and}\quad 
V_2=\log\left(\frac{1}{\rho^2}-1\right).
\label{E:assy4}
\end{equation}
It is not hard to see using the mean value theorem that
$$
\log^2\left(V_2+\log\log\frac{1}{x}\right)-\left(\log\log\log\frac{1}{x}\right)^2=o(1)
$$
as $x\rightarrow 0$. Hence
$$
\exp\left\{-\frac{1}{2T(\sigma_1^2-\sigma_2^2)}\log^2\left(V_2+\log\log\frac{1}{x}\right)\right\}
\sim\exp\left\{-\frac{1}{2T(\sigma_1^2-\sigma_2^2)}\left(\log\log\log\frac{1}{x}\right)^2\right\}
$$
as $x\rightarrow 0$. In addition,
\begin{align*}
&\exp\left\{\frac{1}{T(\sigma_1^2-\sigma_2^2)}\left(\log\log\frac{1}{x}\right)
\left(\log\left(V_2+\log\log\frac{1}{x}\right)\right)\right\} \\
&\approx\left(\log\frac{1}{x}\right)^{\frac{1}{T(\sigma_1^2-\sigma_2^2)}}
\exp\left\{\frac{1}{T(\sigma_1^2-\sigma_2^2)}\left(\log\log\frac{1}{x}\right)\left(
\log\log\log\frac{1}{x}\right)\right\}
\end{align*}
as $x\rightarrow 0$. Therefore, (\ref{E:assy3}) implies the following estimate for the density $p_T$:
\begin{align}
&p_T(x)\approx\left(\frac{1}{x}\right)^{1-\frac{\mu_{2,T}}{T\sigma_2^2}}\left(\log\frac{1}{x}\right)^{-\frac{V_{1,T}}
{T(\sigma_1^2-\sigma_2^2)}}\left(\log\log\frac{1}{x}\right)^{\frac{V_{1,T}}
{T(\sigma_1^2-\sigma_2^2)}-\frac{1}{2}} \nonumber \\
&\exp\left\{-\frac{\log^2\frac{1}{x}}{2T\sigma_2^2}\right\}
\exp\left\{-\frac{1}{2T(\sigma_1^2-\sigma_2^2)}\left(\log\log\frac{1}{x}\right)^2\right\} \nonumber \\
&\exp\left\{-\frac{1}{2T(\sigma_1^2-\sigma_2^2)}\left(\log\log\log\frac{1}{x}\right)^2\right\} \nonumber \\
&\exp\left\{\frac{1}{T(\sigma_1^2-\sigma_2^2)}\left(\log\log\frac{1}{x}\right)
\left(\log\log\log\frac{1}{x}\right)\right\}
\label{E:assy6}
\end{align}
as $x\rightarrow 0$.

Our next goal is to obtain a two-sided estimate for the put pricing function $P$, by taking into account formula (\ref{E:assy6}).
We will use the ideas employed in the proof of Theorem \ref{P:eshchio}. Let us set
$$
B(u)=\frac{u^2}{2T\sigma_2^2}+\frac{\log^2u}{2T(\sigma_1^2-\sigma_2^2)}
+\frac{(\log\log u)^2}{2T(\sigma_1^2-\sigma_2^2)}-\frac{1}{T(\sigma_1^2-\sigma_2^2)}(\log u)(\log\log u)
$$
and 
$$
a(y)=y^{-2-\frac{\mu_{2,T}}{T\sigma_2^2}}(\log y)^{-\frac{V_{1,T}}{T(\sigma_1^2-\sigma_2^2)}}
(\log\log y)^{\frac{V_{1,T}}{T(\sigma_1^2-\sigma_2^2)}-\frac{1}{2}}.
$$
It is not hard to see that the restrictions, under which formula (\ref{E:iii}) is valid, are satisfied.
In addition, for the function $b(x)=B(\log x)$, we have $b^{\prime}(x)\approx\frac{\log x}{x}$ as $x\rightarrow\infty$. Now, reasoning as in the proof of Theorem \ref{P:eshchio}, we obtain the following formula: $P(K)\approx\widetilde{P}(K)$ as $K\rightarrow 0$, where
\begin{align}
&\widetilde{P}(K)=\left(\frac{1}{K}\right)^{-1-\frac{\mu_{2,T}}{T\sigma_2^2}}\left(\log\frac{1}{K}\right)
^{-\frac{V_{1,T}}{T(\sigma_1^2-\sigma_2^2)}-2}\left(\log\log\frac{1}{K}\right)
^{\frac{V_{1,T}}{T(\sigma_1^2-\sigma_2^2)}-\frac{1}{2}} \nonumber \\
&\exp\left\{-\frac{\log^2\frac{1}{K}}{2T\sigma_2^2}\right\}
\exp\left\{-\frac{1}{2T(\sigma_1^2-\sigma_2^2)}\left(\log\log\frac{1}{K}\right)^2\right\} \nonumber \\
&\exp\left\{-\frac{1}{2T(\sigma_1^2-\sigma_2^2)}\left(\log\log\log\frac{1}{K}\right)^2\right\} \nonumber \\
&\exp\left\{\frac{1}{T(\sigma_1^2-\sigma_2^2)}\left(\log\log\frac{1}{K}\right)
\left(\log\log\log\frac{1}{K}\right)\right\}
\label{E:assy7}
\end{align}
as $K\rightarrow 0$.
Next, using (\ref{E:assy8}) 
with $\widetilde{P}$ given by (\ref{E:assy7}), and making numerous simplifications, we 
obtain the following asymptotic formula for the implied volatility in the exceptional case:
\begin{equation}
I(K)=\sigma_2+O\left(\left(\log\frac{1}{K}\right)^{-1}\right)
\label{E:assy9}
\end{equation}
as $K\rightarrow 0$. Comparing formula (\ref{E:assy9}) with formulas (\ref{E:assy1}) and (\ref{E:assy2}), we see that the behavior
of the implied volatility at the critical point $\rho=\frac{\sigma_2}{\sigma_1}$, where the 
qualitative change happens, is similar to that in the case where $\rho>\frac{\sigma_2}{\sigma_1}$.

\section{Time-changed multidimensional Black-Scholes model}\label{S:timechange}
Recall that in Section \ref{S:nBS}, we introduced the price process $S$ for a basket of assets (see formula (\ref{E:pricep})). The present section deals with time changes in such processes.
Suppose $\tau_t$, $t\ge 0$, is a non-negative non-decreasing 
stochastic process on $(\Omega,{\cal F},\{{\cal F}\}_{t\ge 0},\mathbb{P})$ (a time change). Then, the time-changed process $S$ has the following form: $t\mapsto S_{\tau_t}$. We only consider time changes which are independent of the price process $S$. In the next subsections, two-sided estimates for marginal distribution functions of time-changed price processes such as above will be established. Moreover, the leading term in the asymptotic expansion of the implied volatility associated with a time-changed price process $t\mapsto S_{\tau_t}$ in the $n$-dimensional Black-Scholes model will be found.
\subsection{Bounds on distribution functions of sums of log-normal mixtures}
The next assertion provides an upper bound for the distribution
function of a random variable imitating the random variable
$S_{\tau_t}$ for fixed $t> 0$. The additional drift vector $\tilde
\mu$ will be needed later to ensure the martingale property. 
\begin{theorem}[Upper bound]\label{T:nel}
Let $Y$ be a centered Gaussian vector with covariance matrix $\cm=[b_{ij}]_{1\le i,j\le n}$,
and let $\mu \in \mathbb R^n$ and $\tilde \mu \in \mathbb R^n$. Suppose $Z$ is a random variable with
values in $(0,\infty)$, which has a density $\rho(x)$ satisfying $\rho(s)\leq
c s^\alpha e^{-\theta s}$ for $s\geq 1$, where $\theta
>0$, $c>0$ and $\alpha \in \mathbb R$ are constants. Then, there exists $C> 0$ such that as $k\to +\infty$,
$$
\mathbb P[\sum_{i=1}^n e^{Y_i \sqrt{Z} + \mu_i Z + \tilde \mu_i} \leq e^{-k}]
\lesssim C k^\alpha e^{-c^*k },
$$
where 
\begin{equation}
c^* = \min_{t\geq 0} \max_{w\in \Delta_n} \left\{\theta t +
  \frac{(1+t\mu^\perp w)^2}{2w^\perp \cm w t}\right\}.
\label{E:prev}
\end{equation}
\end{theorem}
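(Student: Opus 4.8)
The plan is to bound the probability $\mathbb{P}[\sum_i e^{Y_i\sqrt{Z}+\mu_i Z+\tilde\mu_i}\le e^{-k}]$ by conditioning on $Z$ and estimating, for each fixed value $Z=s$, the probability that a sum of correlated log-normals is small. First I would fix $s>0$ and observe that on the event $\{\sum_i e^{Y_i\sqrt{s}+\mu_i s+\tilde\mu_i}\le e^{-k}\}$ every summand is at most $e^{-k}$, so for each $i$ we have $Y_i\sqrt{s}\le -k-\mu_i s-\tilde\mu_i$. Hence the conditional probability is bounded by $\mathbb{P}[Y_i\sqrt{s}\le -k-\mu_i s-\tilde\mu_i \text{ for all } i]$, which is a small-ball / large-deviations event for the Gaussian vector $Y$. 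The sharp rate for $\mathbb{P}[\langle Y,\cdot\rangle \le \text{(linear)}]$ of this type is governed by a quadratic program: writing the constraint region as a shifted orthant and using the standard fact that $-\log\mathbb{P}[Y\in -a+\mathbb{R}^n_+]\sim \tfrac12\min\{v^\perp\cm^{-1}v: v\ge a\}$, one gets that the conditional probability is, up to subexponential factors, $\exp\{-\frac{1}{2s}\,g(k,s)\}$ where $g(k,s)=\min\{w^\perp\cm w : w\in\Delta_n,\ ?\}$ — more precisely, by Legendre duality the exponent takes the form $\sup_{w\in\Delta_n}\frac{(k+s\,\mu^\perp w+\tilde\mu^\perp w)^2}{2\,w^\perp\cm w\, s}$. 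For large $k$ the $\tilde\mu^\perp w$ term is lower order, giving exponent $\sim \sup_{w\in\Delta_n}\frac{(k+s\mu^\perp w)^2}{2 w^\perp\cm w\, s}$.

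Next I would integrate over $Z$: using $\rho(s)\le c s^\alpha e^{-\theta s}$,
\[
\mathbb{P}\Big[\textstyle\sum_i e^{Y_i\sqrt{Z}+\mu_i Z+\tilde\mu_i}\le e^{-k}\Big]
\le \int_0^\infty \exp\Big\{-\sup_{w\in\Delta_n}\frac{(k+s\mu^\perp w)^2}{2 w^\perp\cm w\, s}\Big\}\, c s^\alpha e^{-\theta s}\, ds + (\text{boundary term from } s<1),
\]
and the integrand is of the form $s^\alpha\exp\{-k\,\phi(s/k)\cdot k\}$-type after the substitution $s=kt$, so a Laplace/Varadhan argument identifies the exponential rate as
\[
\min_{t\ge 0}\Big\{\theta t + \max_{w\in\Delta_n}\frac{(1+t\mu^\perp w)^2}{2 w^\perp\cm w\, t}\Big\} = c^*,
\]
and tracking the polynomial prefactors through the substitution produces the $k^\alpha$ factor. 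I would need to check that the inner supremum over $\Delta_n$ of $\frac{(1+t\mu^\perp w)^2}{2w^\perp\cm w\,t}$ is finite and continuous in $t$ (it is, since $w^\perp\cm w$ is bounded below on the compact set $\Delta_n$ by nondegeneracy of $\cm$), and that the outer minimization over $t\ge 0$ is attained at some $t^*\in(0,\infty)$ (the $\theta t$ term forces coercivity as $t\to\infty$, and the second term blows up like $1/t$ as $t\to 0$).

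The main obstacle I expect is making the conditional Gaussian small-ball estimate uniform in $s$ with the right constant — i.e., proving $\mathbb{P}[Y_i\sqrt s\le -k-\mu_i s-\tilde\mu_i\,\forall i]\le (\text{poly})\cdot\exp\{-\sup_{w\in\Delta_n}\frac{(k+s\mu^\perp w)^2+o(k^2)}{2 w^\perp\cm w\, s}\}$ with explicit control of the polynomial factor as a function of $(k,s)$, since the Laplace step then needs this bound integrated over all $s\in[1,\infty)$, including the ``bad'' regions where $s$ is comparable to or much larger than $k$. For the upper bound, however, we only need a \emph{one-sided} estimate, which is cleaner: one can simply bound $\mathbb{P}[Y\in -a+\mathbb{R}^n_+]\le \mathbb{P}[\langle\theta_0,Y\rangle\le -\langle\theta_0,a\rangle]$ for any fixed direction $\theta_0\ge 0$ and then optimize the one-dimensional Gaussian tail over $\theta_0$, which yields exactly the dual expression $\exp\{-\frac12\min_{v\ge a}v^\perp\cm^{-1}v\}$ without any small-ball subtlety — the polynomial factors from a single Gaussian tail $\mathbb{P}[N(0,\sigma^2)\le -x]\le e^{-x^2/2\sigma^2}$ are then trivial. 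With this simplification the remaining work is the deterministic Laplace-asymptotics bookkeeping, splitting the $s$-integral near the minimizer $t^*k$ and crudely bounding the tails, which is routine given the coercivity noted above.
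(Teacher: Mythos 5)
Your plan follows essentially the same architecture as the paper's proof: condition on $Z$, reduce the event $\{\sum_i e^{Y_i\sqrt{s}+\mu_i s+\tilde\mu_i}\le e^{-k}\}$ to a one-dimensional Gaussian tail through a weight vector $w\in\Delta_n$, substitute $s=kt$, and run Laplace's method in $t$. (The paper's reduction uses Jensen's inequality, $\log\sum_i e^{x_i}\ge\sum_i w_ix_i+\mathcal E(w)$, with a \emph{single fixed} weight $\bar w$ coming from the saddle point; your reduction intersects the events and projects, optimizing $w$ separately for each $t$ --- the additive difference $\mathcal E(w)\le\log n$ is immaterial. Your opening appeal to a sharp multivariate orthant estimate is actually what the paper needs only for the lower bound, Theorem~\ref{T:nell}; your self-correction to the one-sided projection is the right move for the upper bound.)

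Two points that you dismiss as routine are where the real work lies, and one of them makes your displayed inequality false as written. First, when $\mu$ has negative components and $s\gg k$, one has $k+s\mu^\perp w+\tilde\mu^\perp w\le 0$ for every $w\in\Delta_n$; there the drift alone pushes the basket below $e^{-k}$, the conditional probability tends to $1$, and no bound of the form $\exp\{-\sup_w(k+s\mu^\perp w+\tilde\mu^\perp w)^2/(2sw^\perp\cm w)\}$ holds. The integral must be split, and on that range one must check that $\theta t$ already exceeds $c^*$, so that $\rho(s)\le cs^\alpha e^{-\theta s}$ alone suffices; the paper obtains this from its cutoff $T'=-1/(\mu^\perp\bar w)$ together with $f(T')=\theta T'>f(\bar t)=c^*$, which rests on the saddle-point Lemma~\ref{convex.lm} proved in the appendix. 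Second, the prefactor $k^\alpha$ (rather than $k^{\alpha+1/2}$ from a crude sup bound on the middle range) requires quadratic growth of your exponent $g(t)=\theta t+\max_{w\in\Delta_n}(1+t\mu^\perp w)^2/(2tw^\perp\cm w)$ away from its minimizer, not just continuity and coercivity; because you keep the inner maximum, this needs a Danskin/envelope-type argument identifying $\partial_tF(t^*,w^*)=0$ at the optimal pair (or the paper's device of fixing $w=\bar w$, for which $F(\cdot,\bar w)$ is an explicit smooth strictly convex function whose minimum equals $c^*$ by the same lemma). Neither issue is fatal, but both are precisely the content that the paper isolates in Lemma~\ref{convex.lm} and in the choice of $T,T'$, and they must be supplied to complete your argument.
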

\begin{proof} In this proof, $C$ denotes a constant which may change
from line to line. 
For $k>0$, set
$$
F_t(k) =\mathbb P\left[ \sum_{i=1}^n e^{Y_i \sqrt{kt} + \mu_i k t +
    \tilde \mu_i}
  \leq e^{-k}\right]. 
$$
Fix $w\in \Delta_n$, and let $t$ be such that $
 1+t\mu^\perp
    w > 0. 
$
Then, by Jensen's inequality, 
\begin{align*}
&\mathbb P\left[ \sum_{i=1}^n e^{Y_i \sqrt{kt} + \mu_i k t+\tilde \mu_i}
  \leq e^{-k}\right]\leq \mathbb P\left[\sqrt{kt}\sum_{i=1}^n w_i Y_i
  + kt \mu^\perp w +\tilde \mu^\perp w+ \mathcal E(w)\leq -k\right]\\
& = N\left(- \frac{k+ tk \mu^\perp w +\tilde \mu^\perp w+ \mathcal E(w)}{\sqrt{w^\perp
      \cm w kt}}\right)\leq \frac{C \sqrt{t}}{(1+ t \mu^\perp w)\sqrt{k}}\exp\left\{-\frac{(k+ tk \mu^\perp w +\tilde \mu^\perp w+ \mathcal E(w))^2}{2w^\perp
      \cm w kt}\right\}\\
&  = \frac{C \sqrt{t}}{(1+ t \mu^\perp
  w)\sqrt{k}}\exp\left\{-k\frac{(1+ t \mu^\perp w )^2}{2 w^\perp
      \cm w t}\right\} \exp\left\{-\frac{(\tilde \mu^\perp w+\mathcal E(w))^2}{2 w^\perp
      \cm w kt}\right\}\\ &\qquad \qquad \times \exp\left\{-\frac{\mathcal E(w) +\tilde \mu^\perp w}{ w^\perp
      \cm w t}\right\} \exp\left\{-\frac{\mu^\perp w(\mathcal E(w) +\tilde \mu^\perp w)}{ w^\perp
      \cm w }\right\}\\
&\leq \frac{C \sqrt{t}}{(1+ t \mu^\perp
  w)\sqrt{k}}\exp\left\{-k\frac{(1+ t \mu^\perp w )^2}{2 w^\perp
      \cm w t}\right\}\exp\left\{-\frac{\tilde \mu^\perp w}{ w^\perp
      \cm w t}\right\},
\end{align*}
where $\mathcal E(w)$ is defined by (\ref{E:sas}).

Consider the following function:
$$
F(t,w) = \theta t + \frac{(1 + t\mu ^\perp  w)^2}{2 
  w^\perp \cm w t}.
$$
The following lemma establishes some properties of this function. The
proof is given in the appendix. 
\begin{lemma}\label{convex.lm}
There exists a unique couple $(\bar t,\bar w)$, with $\bar t\in
(0,\infty)$ and $\bar w \in \Delta_n$ such that 
$$
F(\bar t, \bar w) = \min_{t>0}\max_{w\in \Delta_n} F(t,w).
$$
In addition, the function 
$$
f(t) = F(t,\bar w)
$$
has a unique minimum at the point $\bar t$. 
\end{lemma}
We clearly have $ 1 + \bar t \mu^\perp \bar w >0$. Indeed, if $1 + \bar t \mu^\perp \bar w<0$ then $f(-\frac{1}{\mu^\perp \bar
  w})  <  f(\bar t)$ which contradicts the fact that
$\bar t$ is the minimizer. If $1 + \bar t \mu^\perp \bar w=0$ then
$f'(\bar t) = \theta$ which also leads to a contradiction.
Let 
$$
T' = \left\{\begin{aligned}
-&\frac{1}{\mu^\perp \bar w},\quad &&\mu^\perp \bar w < 0\\
&+\infty\quad &&\text{otherwise},
\end{aligned}\right.
$$
Remark that if $T'<\infty$, then
$f(T') = \theta T' > f(\bar t)$. Let us also choose $T$ small enough so that 
$$
1- |\mu^\perp \bar w| T \geq
\frac{1}{2}\quad\mbox{and}\quad\frac{1}{8\bar w^\perp \cm \bar w T} >
f(\bar t).
$$
and assume that $k$ is large enough so that $k+ 8 \tilde \mu \bar
w>0$.  
We bound the distribution function of the Gaussian mixture from above as
follows:
\begin{align}
&\mathbb P[\sum_{i=1}^n e^{Y_i \sqrt{Z} + \mu_i Z + \tilde \mu_i} \leq e^{-k}] =
\mathbb E[F_{Z/k}(k)] = \int_0^\infty \rho(s) F_{s/k}(k) ds = k
\int_0^\infty \rho(tk) F_t(k) dt\\
&\leq  k \max_{0\leq t \leq T} F_t(k)  + k \int_T^{T'}
\frac{C(tk)^\alpha\sqrt{t}}{\sqrt{k} (\mathbf 1+\mu t)^\perp w} e^{-kf(t)} dt +
c k\int_{T'}^\infty e^{-tk\theta} (tk)^{\alpha} dt. 
\label{E:nel}
\end{align} 
Now, by the choice of $T$, the first term on the right-hand side of the last inequality in (\ref{E:nel}) 
satisfies
$$
k \max_{0\leq t \leq T} F_t(k) \leq C \sqrt{k} e^{-\beta k}
$$
with $\beta > f(t^*)$. 
The second term is computed using Laplace's
method. As $k\to +\infty$, up to a constant, 
$$
k \int_T^{T'}
\frac{C(tk)^\alpha\sqrt{t}}{\sqrt{k} (\mathbf 1+\mu t)^\perp w} e^{-kf(t)} dt
\sim C k^\alpha e^{-kf(t^*)}. 
$$
Finally, the last term is negligible by the choice of $T'$.

The proof of Theorem \ref{T:nel} is thus completed.
\end{proof}

Our next goal is to establish a lower estimate complementing the estimate in Theorem \ref{T:nel}. Note that 
the estimates in Theorems \ref{T:nel} and \ref{T:nell} are off by the factor $k^{-n}$.
\begin{theorem}[Lower bound]\label{T:nell}
Let $Y$ be a centered Gaussian vector with covariance matrix $\cm$
and let $\mu \in \mathbb R^n$ and $\tilde\mu \in \mathbb R^n$. Let $Z$ be a random variable with
values in $(0,\infty)$, which has a density $\rho(x)$ satisfying $\rho(s)\geq
c s^\alpha e^{-\theta s}$ for $s\geq 1$, where $\theta
>0$, $c>0$ and $\alpha \in \mathbb R$ are constants. Then, there exists $C>0$ such that as $k\to +\infty$,
$$
\mathbb P[\sum_{i=1}^n e^{Y_i \sqrt{Z} + \mu_i Z + \tilde \mu_i} \leq e^{-k}] \gtrsim
C k^{\alpha-n}e^{-c^* k},
$$
where $c^*$ is given by (\ref{E:prev}). 
\end{theorem}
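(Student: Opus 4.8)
The plan is to run the argument of Theorem~\ref{T:nel} in reverse, replacing every upper estimate by a matching lower one; the price of doing so is an extra polynomial factor, which accounts for the gap $k^{-n}$ between the two bounds. First I would condition on the time change exactly as in the proof of Theorem~\ref{T:nel}: with $F_t(k):=\mathbb P[\sum_{i=1}^n e^{Y_i\sqrt{kt}+\mu_i kt+\tilde\mu_i}\le e^{-k}]$ and $(\bar t,\bar w)$ the saddle point furnished by Lemma~\ref{convex.lm}, one has
$$
\mathbb P\Big[\sum_{i=1}^n e^{Y_i\sqrt Z+\mu_i Z+\tilde\mu_i}\le e^{-k}\Big]=k\int_0^\infty \rho(kt)F_t(k)\,dt\ \ge\ c\,k^{1+\alpha}\int_{\bar t-\delta}^{\bar t+\delta}t^\alpha e^{-\theta kt}F_t(k)\,dt ,
$$
for $\delta>0$ small and fixed and $k$ large enough that $1/k<\bar t-\delta$, where we used the hypothesis $\rho(s)\ge c s^\alpha e^{-\theta s}$ on $[1,\infty)$. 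Next I would bound $F_t(k)$ from below by shrinking the event to $\{Y_i\sqrt{kt}+\mu_i kt+\tilde\mu_i\le -k-\log n\ \text{for all }i\}$ — on which each summand is at most $e^{-k}/n$ — so that $F_t(k)\ge \mathbb P[Y\le v^{(t,k)}]$ with $v^{(t,k)}_i=-\sqrt{k/t}\,(1+\mu_i t)+O(1/\sqrt k)$, uniformly in $t\in[\bar t-\delta,\bar t+\delta]$.

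The crux is then a lower bound for this $n$-dimensional Gaussian orthant probability deep in the tail. Writing $u:=-v^{(t,k)}$, so $\mathbb P[Y\le v^{(t,k)}]=\mathbb P[Y\ge u]$, I would use an exponential change of measure (a Cameron--Martin shift by $\cm^{-1}u'$, where $u'$ minimizes $y^\perp\cm^{-1}y$ over $\{y\ge u\}$), equivalently integrating the Gaussian density over a small box around $u'$, to obtain
$$
\mathbb P[Y\ge u]\ \ge\ C\,k^{-n}\exp\Big\{-\tfrac12\min_{\,y\ge u}y^\perp\cm^{-1}y\Big\}
$$
for large $k$; the exponent $-n$ is not optimal (even $-n/2$ is attainable, according to how many constraints are active at $u'$) but it suffices. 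To identify the rate, I would invoke the convex-duality identity $\min_{y\ge u}y^\perp\cm^{-1}y=\max_{w\ge0}\{2w^\perp u-w^\perp\cm w\}=\max_{w\in\Delta_n}(u^\perp w)_+^2/(w^\perp\cm w)$, valid whenever $u$ has a positive coordinate — which holds here, since $1+\bar t\mu^\perp\bar w>0$ (established in the proof of Theorem~\ref{T:nel}) forces $1+\mu_i t>0$ for some $i$ when $t$ is near $\bar t$. Homogeneity of degree two together with $u=\sqrt{k/t}\,(\mathbf 1+\mu t)+O(1/\sqrt k)$ then give $\tfrac12\min_{y\ge u}y^\perp\cm^{-1}y=k\max_{w\in\Delta_n}\frac{(1+t\mu^\perp w)^2}{2t\,w^\perp\cm w}+O(1)$ uniformly for $t$ near $\bar t$ (the $O(1/\sqrt k)$ drift coming from $\tilde\mu$ perturbs the exponent only by $O(1)$ because the left-hand side is locally Lipschitz and scales quadratically in $u$). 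Hence $F_t(k)\ge C k^{-n}\exp\{-k\,\max_{w\in\Delta_n}(1+t\mu^\perp w)^2/(2t\,w^\perp\cm w)\}$, with $C$ uniform over the compact $t$-interval.

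Combining these estimates,
$$
\mathbb P\Big[\sum_{i=1}^n e^{Y_i\sqrt Z+\mu_i Z+\tilde\mu_i}\le e^{-k}\Big]\ \ge\ C\,k^{1+\alpha-n}\int_{\bar t-\delta}^{\bar t+\delta}t^\alpha e^{-kG(t)}\,dt,\qquad G(t):=\max_{w\in\Delta_n}\Big\{\theta t+\frac{(1+t\mu^\perp w)^2}{2t\,w^\perp\cm w}\Big\},
$$
and by \eqref{E:prev} together with Lemma~\ref{convex.lm}, $\min_{t>0}G(t)=c^*$, attained at $\bar t$. Since $t\mapsto\theta t+(1+t\mu^\perp w)^2/(2t\,w^\perp\cm w)$ is convex on $(0,\infty)$ for each fixed $w$, $G$ is convex, being a supremum of convex functions; a convex function with minimum value $c^*$ at $\bar t$ lies below its chords, so $G(t)\le c^*+L|t-\bar t|$ on $[\bar t-\delta,\bar t+\delta]$ for a finite $L$. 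Therefore $\int_{\bar t-\delta}^{\bar t+\delta}t^\alpha e^{-kG(t)}\,dt\ge c\,e^{-kc^*}\int_{\bar t-\delta}^{\bar t+\delta}e^{-kL|t-\bar t|}\,dt\ge C k^{-1}e^{-kc^*}$, whence $\mathbb P[\sum_i e^{Y_i\sqrt Z+\mu_i Z+\tilde\mu_i}\le e^{-k}]\ge C k^{\alpha-n}e^{-c^*k}$, which is the claim.

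The main obstacle is the third step: producing the lower bound for the multidimensional Gaussian orthant probability with the correct exponential rate, and reconciling $\min_{y\ge u}y^\perp\cm^{-1}y$ with the maximum over $\Delta_n$ entering the definition \eqref{E:prev} of $c^*$ — including the bookkeeping for the $\tilde\mu$-perturbation and for the possibility that only a strict subset of the constraints $y_i\ge u_i$ is active at the minimizer. The polynomial factor lost in that step (somewhere between $k^{-n/2}$ and $k^{-n}$) is precisely the source of the discrepancy with Theorem~\ref{T:nel}.
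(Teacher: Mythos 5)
Your proposal is correct and follows essentially the same route as the paper: restrict to the event that every summand is at most $e^{-k}/n$, lower-bound the resulting Gaussian orthant probability with exponential rate $\tfrac12\min_{y\ge u}y^\perp\cm^{-1}y$ identified via convex duality with $\max_{w\in\Delta_n}(1+t\mu^\perp w)^2/(2t\,w^\perp\cm w)$ (up to an $O(1)$ perturbation from $\tilde\mu$ and $\log n$), and localize the time integral near $\bar t$ to pick up the rate $c^*$ with a polynomial loss of at most $k^{-n}$. The only cosmetic differences are that the paper invokes Proposition 3.2 of Hashorva and H\"usler for the orthant lower bound (with the factor $(1+k(1+t))^{-n}$) where you sketch a direct tilting/box argument, and it integrates over a window of width $1/k$ around $\bar t$ instead of your fixed $\delta$-window combined with a convexity chord bound.
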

\begin{proof}
It is clear that
$$
\mathbb P\left[ \sum_{i=1}^n e^{Y_i \sqrt{kt} + \mu_i k t + \tilde \mu_i}
  \leq e^{-k}\right]\geq \mathbb P[Y_i \sqrt{kt} + \mu_i k t + \tilde \mu_i\leq - k
- \log n, i=1,\dots,n].
$$
By Proposition 3.2 in \cite{hashorva2003multivariate}, the above
probability can be bounded from below (very roughly) as follows:
\begin{align*}
\mathbb P[Y_i \sqrt{kt} + \mu_i k t + \tilde \mu_i \leq - k
- \log n, i=1,\dots,n] \geq \frac{C}{(1+k(1+t))^n} \exp\left\{-\alpha_t / 2\right\},
\end{align*}
where 
\begin{align*}
\alpha_t &= \min_{x\geq \frac{1}{\sqrt{kt}}((k+ \log n) \mathbf 1+ kt
  \mu + \tilde \mu)}x^\perp \cm^{-1}x\\
& = \max_{u\in \mathbb R^n_+} \left\{-\frac{1}{2} u^\perp \cm u +
  u^\perp \frac{1}{\sqrt{kt}}((k+ \log n) \mathbf 1+ kt
  \mu + \tilde \mu)\right\}\\
& = \max_{w\in \Delta_n} \frac{(k+\log n + kt \mu^\perp w + \tilde
  \mu^\perp w)^2}{2
  w^\perp \cm w kt}\\
&\leq \max_{w\in \Delta_n} k\frac{(1+ t \mu^\perp w)^2}{2
  w^\perp \cm w t}+ \max_{w\in \Delta_n}\frac{(1+ t \mu^\perp w)(\log
  n + \tilde \mu^\perp w)}{
  w^\perp \cm w t} + \max_{w\in \Delta_n}\frac{(\log n + \tilde
  \mu^\perp w)^2}{2
  w^\perp \cm w kt}.
\end{align*}
Finally, we bound the distribution function of the Gaussian mixture
from below as follows:
\begin{align*}
&\mathbb P[\sum_{i=1}^n e^{Y_i \sqrt{Z} + \mu_i Z + \tilde \mu_i} \leq e^{-k}] = k
\int_0^\infty \rho(tk) F_t(k) dt\geq c k \int_{\bar t - 1/k}^{\bar t+1/k}
(tk)^\alpha e^{-\theta t k} F_t(k) dt\\
&\geq \frac{Ck (\bar t k)^\alpha}{(1+k(1+t))^n}\int_{\bar t - 1/k}^{\bar t+1/k}
\exp\left\{-\theta \bar t k - k\max_{w\in \Delta_n} \frac{(1+ t \mu^\perp w)^2}{2
  w^\perp \cm w t}\right\}dt\\
&\geq \frac{C (\bar tk)^\alpha}{(1+k(1+\bar t))^n}
\exp\left\{-\theta \bar t k - k\max_{w\in \Delta_n} \frac{(1+ \bar t \mu^\perp w)^2}{2
  w^\perp \cm w \bar t}\right\} = \frac{C k^\alpha e^{-k f(\bar t)}}{(1+k(1+\bar t))^n}.
\end{align*}
\end{proof}
\begin{remark}
Theorems \ref{T:nel} and \ref{T:nell} show that under their assumptions, the dominating factor describing the decay of the left tail of
the price of a portfolio of assets is exponential with the
decay rate equal to the constant $c^*$. For example, for $n=1$, we
have
$$
c^* = \min_{t\geq 0}\{\theta t + \frac{(1+\mu t)^2}{2\sigma^2}\} =
\frac{\sqrt{2\theta \sigma^2 + \mu^2} + \mu}{\sigma^2}. 
$$
In symmetric models with $\mu = 0$, the formula for $c^*$ simplifies
to
$$
c^* = \sqrt{\frac{2\theta}{\min_{w\in \Delta_n} w^\perp \cm w}}. 
$$
\end{remark}
\subsection{Implied volatility asymptotics}\label{SS:iva}
Let $S^1,\dots,S^n$ be assets such that
$$
\log \widetilde S_t = \log \widetilde S_0  + \tilde\mu t + \mu \tau_t
+ \cm^{\frac{1}{2}} W_{\tau_t}, 
$$
where we use the same notation as in the beginning of Section
\ref{S:nBS}. Let $S$ denote the price process of the basket. Fix a
maturity $T> 0$, and
suppose the random variable $\tau_T$ has a density  $\rho_T$. Suppose
also that there exist $c_1> 0$, $c_2> 0$, $\theta> 0$ and
$\alpha\in\mathbb R$ such that 
\begin{equation}
c_1 s^\alpha e^{-\theta s}\le\rho_T(s)\le
c_2s^\alpha e^{-\theta s},\quad s\geq 1.
\label{E:timch}
\end{equation}
We assume that for every $i=1,\dots,n$, 
\begin{align}
\theta > \mu_i + \frac{\cm_{ii}}{2}. \label{mart1}
\end{align}
This assumption implies that there exists $\varepsilon>0$ such that
$$
\mathbb E[(S^i_T)^{1+\varepsilon}]<\infty
$$
We then assume further that $\tilde \mu_i$ is chosen in such way that 
\begin{align}
\mathbb E[S^i_T] = S^i_0. \label{mart2}
\end{align}

It follows from Theorems \ref{T:nel} and \ref{T:nell} 
that there exist $C_1> 0$, $C_2> 0$, and $y_0> 0$ such that
\begin{equation}
C_1y^{c^{*}}\left[\log\frac{1}{y}\right]^{\alpha-n}\le \mathbb
P[S_{\tau_T} \leq y]\le C_2y^{c^{*}} \left[\log\frac{1}{y}\right]^{\alpha},\quad y< y_0.
\label{E:twos}
\end{equation}

Since we have
$$
P(K)=\mathbb{E}\left[\left(K-S_{\tau_T}\right)^{+}\right]=\int_0^K\mathbb P[S_{\tau_T} \leq y]dy,
$$
the estimates in (\ref{E:twos}) imply that there exist $C_3> 0$, $C_4> 0$, and $K_0> 0$ such that
\begin{equation}
C_3\,K^{c^{*}+1}\left[\log\frac{1}{K}\right]^{\alpha-n}\le P(K)\le C_4\,K^{c^{*}+1}\left[\log\frac{1}{K}\right]^{\alpha},\quad K< K_0.
\label{E:twosi}
\end{equation}
Note that the put pricing pricing in (\ref{E:twosi}) is squeezed between two regularly varying functions with
the same index of regular variation at zero. Such estimates allow one to find the leading term in the asymptotic expansion of the implied volatility near zero. 
\begin{theorem}\label{T:allowim}
Suppose condition (\ref{E:timch}) holds for the time-change process
$\tau$ and that the assumptions \eqref{mart1} and \eqref{mart2} are satisfied. Then the following asymptotic formula holds for the implied volatility in time-changed $n$-dimensional Black-Scholes model:
$$
I(K)\sim\left(\frac{\psi(c^{*})}{T}\right)^{\frac{1}{2}}\sqrt{\log\frac{1}{K}}
$$
as $K\rightarrow 0$, where the function $\psi$ is defined by
\begin{equation}
\psi(u)=2-4(\sqrt{u^2+u}-u),\quad u> 0
\label{E:psik}
\end{equation}
and the constant $c^*$ is given by Formula \eqref{E:prev}.
\end{theorem}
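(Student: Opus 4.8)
The plan is to feed the two-sided estimate \eqref{E:twosi} for the put price into the zero-order model-free expansion \eqref{E:assy8}; the result is essentially a sharpened form of Lee's left-wing moment formula, upgraded from a $\limsup$ to a genuine limit because \eqref{E:twosi} squeezes $P$ between regularly varying functions of the same index. First I would check that \eqref{E:assy8} applies in the present setting. The terminal price $S_{\tau_T}=\sum_{i=1}^n e^{Y_i\sqrt{\tau_T}+\mu_i\tau_T+\tilde\mu_i}$ is a finite sum of strictly positive exponentials, hence $S_{\tau_T}>0$ a.s. and its law has no atom at $0$; moreover $S_0=\sum_i\lambda_i S^i_0=1$, and assumptions \eqref{mart1}--\eqref{mart2} (integrability from \eqref{mart1}, centering from \eqref{mart2}) make $t\mapsto S_{\tau_t}$ a non-negative martingale with $S_0=1$. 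These are exactly the standing hypotheses behind the model-free formulas of Section \ref{S:modelfree}, so \eqref{E:assy8} is available; I would invoke it with the trivial choice $\widetilde P=P$. Note that one cannot replace $P$ by a clean power $K^{c^*+1}$, because the slowly varying factors in the two bounds of \eqref{E:twosi} differ by a power of $\log\frac1K$, so that $P\not\approx K^{c^*+1}$ in general; but $P\approx P$ is all that \eqref{E:assy8} requires.

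Next I would extract the needed logarithmic asymptotics. Writing $L:=\log\frac1K$ and taking logarithms in \eqref{E:twosi} gives
$$(c^*+1)L-\alpha\log L-\log C_4\ \le\ \log\tfrac1{P(K)}\ \le\ (c^*+1)L-(\alpha-n)\log L-\log C_3,$$
and subtracting $L$ yields the same sandwich for $\log\frac{K}{P(K)}$ with $c^*$ in place of $c^*+1$. Since $\log L=o(L)$ and $c^*>0$ — the positivity follows from Lemma \ref{convex.lm}, as $c^*=\theta\bar t+\frac{(1+\bar t\,\mu^\perp\bar w)^2}{2\,\bar w^\perp\cm\bar w\,\bar t}\ge\theta\bar t>0$ — we get, as $K\to0$,
$$\log\tfrac1{P(K)}\sim(c^*+1)L,\qquad \log\tfrac{K}{P(K)}\sim c^*L,\qquad \log\log\tfrac{K}{P(K)}\sim\log L=o(L).$$
Plugging these into \eqref{E:assy8}, the first radical becomes $\sqrt{\log\frac1{P(K)}-\frac12\log\log\frac{K}{P(K)}}\sim\sqrt{(c^*+1)L}$, the second becomes $\sqrt{\log\frac{K}{P(K)}-\frac12\log\log\frac{K}{P(K)}}\sim\sqrt{c^*L}$, and the error term is $O\!\big((\log\frac{K}{P(K)})^{-1/2}\big)=O(L^{-1/2})=o(\sqrt L)$. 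Therefore
$$I(K)=\frac{\sqrt2}{\sqrt T}\big(\sqrt{c^*+1}-\sqrt{c^*}\big)\sqrt L+o(\sqrt L)\ \sim\ \frac{\sqrt2}{\sqrt T}\big(\sqrt{c^*+1}-\sqrt{c^*}\big)\sqrt{\log\tfrac1K}.$$

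It remains to identify the constant. The elementary identity
$$2\big(\sqrt{c^*+1}-\sqrt{c^*}\big)^2=2\big(2c^*+1-2\sqrt{(c^*)^2+c^*}\big)=2-4\big(\sqrt{(c^*)^2+c^*}-c^*\big)=\psi(c^*),$$
with $\psi$ as in \eqref{E:psik}, turns the last equivalence into $I(K)\sim\sqrt{\psi(c^*)/T}\,\sqrt{\log(1/K)}$, which is the assertion. I do not expect a genuine obstacle here: the substantive analysis — the sandwich \eqref{E:twosi}, itself a consequence of Theorems \ref{T:nel}--\ref{T:nell} — is already done, so the work that remains is purely bookkeeping, namely confirming that the $\log\log$ corrections inside the two radicals and the $O(L^{-1/2})$ remainder of \eqref{E:assy8} are genuinely negligible against the leading term $\sqrt L\to\infty$, and recording $c^*>0$ so that $\psi(c^*)$ is defined.
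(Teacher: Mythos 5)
Your proposal is correct and follows essentially the paper's route: the paper deduces the theorem from the same two-sided put-price estimate \eqref{E:twosi} together with Theorem 10.28 of \cite{gulisashvili.12b}, which is precisely the packaged consequence of the zero-order formula \eqref{E:assy8} that you re-derive by hand. Your explicit verification (no atom at zero, $c^{*}>0$, the logarithmic sandwich, and the identity $2(\sqrt{c^{*}+1}-\sqrt{c^{*}})^{2}=\psi(c^{*})$) is sound and simply spells out what the cited theorem encapsulates.
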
 
\begin{proof}
Theorem \ref{T:allowim} follows from (\ref{E:twosi}) and Theorem 10.28 in \cite{gulisashvili.12b}.
\end{proof}
\begin{remark}
Condition \eqref{E:timch} holds for many processes commonly used
as stochastic time changes, e.g., for the gamma process, the inverse Gaussian process,
or the generalized inverse Gaussian process. The latter process is used as time change
in the generalized hyperbolic L\'evy model. Recall that the density of the gamma process is given by
$$
\rho_t(s) = \frac{\lambda^{ct}}{\Gamma(ct)} s^{ct-1}e^{-\lambda s},
$$
while the density of the inverse Gaussian process is as follows:
$$
\rho_t(s) = \frac{ct}{s^{3/2}} e^{2ct \sqrt{\pi\lambda} - \lambda s -
  \pi c^2 t^2/s}. 
$$
In the previous formulas, the symbols $\lambda$ and $c$ stand for the parameters of the distributions. 
\end{remark}

We close this section with a counterpart of Theorem \ref{T:allowim}
for the right tail, which can be deduced from Theorem
\ref{rightcop.thm} proved in the next section. 
\begin{theorem}
Suppose condition (\ref{E:timch}) holds for the time-change process
$\tau$ and that the assumptions \eqref{mart1} and \eqref{mart2} are satisfied. Then the following asymptotic formula holds for the implied volatility in time-changed $n$-dimensional Black-Scholes model:
$$
I(K)\sim\left(\frac{\psi(c^{min})}{T}\right)^{\frac{1}{2}}\sqrt{\log{K}}
$$
as $K\rightarrow +\infty$,
where 
$$
c^{min} = \min_{i=1,\dots,n} \frac{\sqrt{2\theta \cm_{ii} + \mu_i^2} -
\mu_i}{\cm_{ii}}.
$$
\end{theorem}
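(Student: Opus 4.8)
The plan is to mirror the left-tail argument of Theorem \ref{T:allowim}, transporting everything to the right wing by the standard duality between the left tail of a price process and the right tail of its reciprocal, and then invoking the model-free first-order formula in its right-tail version. First I would derive two-sided bounds for the right tail $\mathbb{P}[S_{\tau_T}\geq y]$ as $y\to+\infty$, analogous to the bounds \eqref{E:twos}. Since for large strikes the dominant contribution to $S_{\tau_T}=\sum_i e^{Y_i\sqrt{\tau_T}+\mu_i\tau_T+\tilde\mu_i}$ comes from a single summand (the one with the heaviest tail), one expects $\mathbb{P}[S_{\tau_T}\geq y]\approx \sum_i \mathbb{P}[e^{Y_i\sqrt{\tau_T}+\mu_i\tau_T+\tilde\mu_i}\geq y/n]$, and each such one-dimensional term is a log-normal mixed over the time change. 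Conditioning on $\tau_T=s$, $\mathbb{P}[Y_i\sqrt{s}+\mu_i s+\tilde\mu_i \geq \log y]$ is a Gaussian tail, and integrating against $\rho_T(s)\approx c s^\alpha e^{-\theta s}$ via Laplace's method yields a decay of the form $y^{-c_i}(\log y)^{\beta}$ with exponent $c_i = \min_{t\geq 0}\{\theta t + (\log y\text{-normalized Gaussian rate})\}$; optimizing exactly as in the Remark following Theorem \ref{T:nell} with $\mu_i$ replaced by $-\mu_i$ (because we are looking at the \emph{upper} tail) gives $c_i = (\sqrt{2\theta\cm_{ii}+\mu_i^2}-\mu_i)/\cm_{ii}$. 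Taking the heaviest tail across $i$ produces the rate $c^{\min}=\min_i c_i$.

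Next I would upgrade the tail estimate for $\mathbb{P}[S_{\tau_T}\geq y]$ to a corresponding estimate for the call price $C(K)=\mathbb{E}[(S_{\tau_T}-K)^+]=\int_K^\infty \mathbb{P}[S_{\tau_T}\geq y]\,dy$, obtaining, just as in \eqref{E:twosi}, a squeeze $C_3\,K^{-c^{\min}+1}(\log K)^{\beta-n}\leq C(K)\leq C_4\,K^{-c^{\min}+1}(\log K)^{\beta}$ for $K$ large, so that $C$ is trapped between two regularly varying functions of the same index $1-c^{\min}$ at infinity. This is the exact right-tail analogue of the left-tail situation, where $P$ was trapped between regularly varying functions of index $c^*+1$ at zero. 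Finally, the leading-order implied volatility follows by applying the right-tail counterpart of the model-free result used to prove Theorem \ref{T:allowim}, i.e.\ Theorem 10.28 in \cite{gulisashvili.12b}: a call price decaying like $K^{1-c^{\min}}$ up to logarithmic corrections forces $I(K)\sim(\psi(c^{\min})/T)^{1/2}\sqrt{\log K}$ as $K\to+\infty$, with $\psi$ as in \eqref{E:psik}. The claim that this also ``can be deduced from Theorem \ref{rightcop.thm}'' amounts to noting that the copula-based right-tail formula of the next section, specialized to the time-changed Black--Scholes model, reproduces exactly the exponent $c^{\min}$.

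The main obstacle is the first step: establishing the two-sided right-tail bound for a sum of log-normal mixtures with the correct polynomial-in-$\log$ corrections and, crucially, identifying the rate $c^{\min}$ as the \emph{minimum} over components rather than some more complicated expression like the $c^*$ of \eqref{E:prev}. The asymmetry is essential — for the left (small-price) tail, all components must be simultaneously small, which is why $c^*$ involves a saddle over the simplex $\Delta_n$; for the right (large-price) tail it suffices that one component be large, so the optimization decouples and one simply takes the best single-asset rate. Making this rigorous requires an upper bound of the form $\mathbb{P}[\sum_i \xi_i \geq y]\leq \sum_i \mathbb{P}[\xi_i\geq y/n]$ together with the elementary lower bound $\mathbb{P}[\sum_i\xi_i\geq y]\geq \max_i\mathbb{P}[\xi_i\geq y]$, and then a careful one-dimensional Laplace-method computation of the mixed Gaussian tail, including the $k^{-n}$-type gap between the two bounds that, as in the left-tail case, is harmless for the leading implied-volatility term because it only perturbs the slowly varying factor. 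Once the tail squeeze is in place, the passage to the call price and then to the implied volatility is routine and parallels the proof of Theorem \ref{T:allowim} verbatim.
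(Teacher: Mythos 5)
Your proposal is correct, and its key insight coincides with the paper's: the right wing is governed by the single heaviest-tailed component, whose mixed-Gaussian tail is computed exactly as in the left-tail case with $\mu_i$ replaced by $-\mu_i$, yielding $c_i=(\sqrt{2\theta\cm_{ii}+\mu_i^2}-\mu_i)/\cm_{ii}$ and hence $c^{\min}=\min_i c_i$. Where you diverge is in how the tail estimate is converted into the implied-volatility asymptotics. The paper's proof is a two-step reduction: it applies Theorems \ref{T:nel} and \ref{T:nell} in the single-asset case (by symmetry) to get $\log\overline G_i(x)\sim -c_i x$ for each marginal log-price, and then simply invokes Theorem \ref{rightcop.thm}, whose proof already contains the sandwich $\max_i\mathbb P[X_i\geq x]\leq\mathbb P[\sum_i X_i\geq x]\leq n\max_i\mathbb P[X_i\geq x/n]$ and concludes via the Benaim--Friz tail-wing formula applied to the survival function of the log-price. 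You instead unpack that sandwich argument directly for $S_{\tau_T}$, carry the two-sided bound through the call price $C(K)=\int_K^\infty\mathbb P[S_{\tau_T}\geq y]\,dy$ to get a squeeze between regularly varying functions of index $1-c^{\min}$, and finish with the right-wing counterpart of Theorem 10.28 in \cite{gulisashvili.12b}, mirroring the proof of Theorem \ref{T:allowim} and estimate \eqref{E:twosi} verbatim; in effect you re-prove Theorem \ref{rightcop.thm} specialized to this model rather than citing it. Both routes are sound and give the same leading term; the paper's is shorter because Theorem \ref{rightcop.thm} packages the work, while yours is more self-contained and makes the $(\log K)$-power gap between the bounds explicit. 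One small point worth making explicit in your route: the integration step requires $c^{\min}>1$ so that the call price is finite and the index $1-c^{\min}$ is negative, and this is precisely what assumption \eqref{mart1} guarantees (it is equivalent to $\theta>\mu_i+\cm_{ii}/2$ for all $i$, i.e.\ to the moment condition $\mathbb E[(S^i_T)^{1+\varepsilon}]<\infty$ that Theorem \ref{rightcop.thm} also needs).
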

\begin{proof}
Let $\overline G_i(x) = \mathbb P[\log S^i_T \geq x]$. By Theorems
\ref{T:nel} and \ref{T:nell}, there exist constants $C_1$ and $C_2$
such that
$$
C_1 x^{\alpha} e^{-c_i x} \gtrsim \overline G_i(x) \gtrsim C_2 x^{\alpha-n} e^{-c_i x}
$$
as $x\to +\infty$, where 
$$
c_i = \frac{\sqrt{2\theta \cm_{ii} + \mu_i^2} -
\mu_i}{\cm_{ii}}.
$$
Note that in the single-asset case Theorems
\ref{T:nel} and \ref{T:nell} can also be applied to the right tail, by
symmetry. It follows that 
$$
\overline G_i(x) \sim - c_i x
$$
as $x\to +\infty$, and the proof may be completed by applying Theorem
\ref{rightcop.thm}. 
\end{proof}
\section{Assets with dependence structure defined by a copula}\label{S:copula}
A popular approach to pricing European style multi-asset options is
to calibrate full-fledged models for marginal distributions of asset prices, and
then use a copula function from a simple parametric family to model the
dependence structure. This is because information about the marginal
distributions can be extracted from the prices of single asset
options, which are liquidly traded, but the market quotes
offer very little information about the dependence. 

\subsection{A very brief primer on copulas}\label{SS:primer}
Recall that the copula of a random vector
$(X_1,\dots,X_n)$ is a function $C:[0,1]^n\mapsto[0,1]$, satisfying the following
conditions:
\begin{itemize}
\item $dC$ is a positive measure in the sense of Lebesgue-Stieltjes integration.
\item $C(u_1,\dots,u_n) = 0$ when $u_k=0$ for at least one $k$. 
\item $C(u_1,\dots,u_n) = u_k$ when $u_i = 1$ for all $i\neq k$.
\end{itemize}
In addition, it is supposed that
$$
\mathbb P[X_1\leq x_1,\dots,X_n \leq x_n] = C(\mathbb P[X_1\leq
x_1],\dots,\mathbb P[X_n\leq
x_n] ),\quad (x_1,\dots,x_n)\in \mathbb R^n.
$$
A copula exists by Sklar's theorem and is uniquely defined in the case where
the marginal distributions of $X_1,\dots,X_n$ are continuous. We refer to
\cite{nelsen} for more details on copulas. 

The Gaussian copula with correlation matrix $R$ is the unique copula of any
Gaussian vector with correlation matrix $R$ and nonconstant components (it does not depend on the
mean vector and on the variances of the components).

Given a function $\phi: [0,1]\to [0,\infty]$
which is continuous, strictly decreasing and such that its inverse
$\phi^{-1}$ is completely monotonic, the Archimedean copula with generator $\phi$ is
defined by
$$
C(u_1,\dots,u_n) = \phi^{-1}(\phi(u_1)+\dots+\phi(u_n)). 
$$ 
\begin{definition}\label{wltdf.def}
The \emph{weak lower tail
  dependence function} $\chi(\alpha_1,\dots,\alpha_n)$ of a copula $C$
is defined by
$$
\chi(\alpha_1,\dots,\alpha_n) = \lim_{u\to
  0}\frac{\min_i\log u^{\alpha_i}}{\log C(u^{\alpha_1},\dots,u^{\alpha_n})},
$$
provided that the limit exists and is finite for all $\alpha_1,\dots,\alpha_n \geq 0$ such
that $\alpha_k>0$ for at least one $k$. 
\end{definition}

We will next formulate several known assertions (see \cite{tankov.14}).
\begin{theorem}\label{main.thm}
Let $X_1,\dots,X_n$ be random variables with state space $(0,\infty)$,
marginal distribution functions $F_1,\dots,F_n$, and a copula
$C$. Suppose that for every $k=1,\dots,n$, the function $F_k$ is slowly varying at
zero, and there exist constants $\eta_k$, $1\le k\le n$, and a function $F$ such that
$$
\log F_k(x) \sim \eta_k \log F(x),\quad 1\le k\le n.
$$
Suppose also that the copula $C$ admits a weak lower tail dependence function
$\chi$. 
Then,
$$
\lim_{x\downarrow 0} \frac{\log \mathbb P[X_1+\dots+X_n \leq x]}{\min_i\log \mathbb P[X_i
  \leq x]} =
\frac{1}{\chi(\eta_1,\dots,\eta_n)}. 
$$
\end{theorem}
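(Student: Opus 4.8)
The plan is to prove matching upper and lower bounds for the ratio $\log \mathbb P[X_1+\dots+X_n\leq x]\,/\,\min_i\log\mathbb P[X_i\leq x]$ and then pass to the limit. Throughout we write $u = F(x)$, so that $u\to 0$ as $x\downarrow 0$, and we use the hypothesis $\log F_k(x)\sim\eta_k\log F(x)$ together with $F\in R_0$ at zero. Set $m(x) := \min_i\log\mathbb P[X_i\leq x] = \min_i\log F_i(x)$. Since $\log F_i(x)\sim\eta_i\log F(x)$ and $\log F(x)\to-\infty$, we have $m(x)\sim (\max_i\eta_i)\log F(x)$ as $x\downarrow 0$; in particular $m(x)\to-\infty$.

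First I would establish the \emph{upper bound} $\log\mathbb P[X_1+\dots+X_n\leq x]\geq (1+o(1))\,m(x)/\chi(\eta_1,\dots,\eta_n)$ as $x\downarrow 0$ (recall $\log$ of a probability is negative, so this is the ``small'' direction for the probability). The key observation is the inclusion
\begin{equation*}
\{X_1\leq x/n,\ \dots,\ X_n\leq x/n\}\subseteq\{X_1+\dots+X_n\leq x\},
\end{equation*}
hence $\mathbb P[X_1+\dots+X_n\leq x]\geq C(F_1(x/n),\dots,F_n(x/n))$. Now I would feed in the asymptotics $\log F_k(x/n)\sim\eta_k\log F(x/n)\sim\eta_k\log F(x)$ (the last step using slow variation of $F$ at zero, so $\log F(x/n)\sim\log F(x)$), write $F_k(x/n) = u^{\alpha_k(x)}$ with $\alpha_k(x)\to\eta_k$ along a suitable reparametrization, and invoke the definition of the weak lower tail dependence function to get $\log C(F_1(x/n),\dots,F_n(x/n))\sim\frac{1}{\chi(\eta_1,\dots,\eta_n)}\min_k\log F_k(x/n)\sim\frac{1}{\chi(\eta_1,\dots,\eta_n)}m(x)$. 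Care is needed because $\chi$ is defined along the one-parameter family $u\mapsto(u^{\alpha_1},\dots,u^{\alpha_n})$ with \emph{fixed} exponents, whereas here the effective exponents $\alpha_k(x)$ drift slightly; I would handle this by sandwiching $\alpha_k(x)$ between $\eta_k\pm\varepsilon$ for $x$ small, using monotonicity of $C$ in each argument, and then letting $\varepsilon\to 0$ — this will require a brief continuity/monotonicity argument for $\chi$ (or one can cite the corresponding property from \cite{tankov.14}).

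Next I would establish the \emph{lower bound} $\log\mathbb P[X_1+\dots+X_n\leq x]\leq(1+o(1))\,m(x)/\chi(\eta_1,\dots,\eta_n)$. Here I use the reverse inclusion
\begin{equation*}
\{X_1+\dots+X_n\leq x\}\subseteq\{X_1\leq x,\ \dots,\ X_n\leq x\},
\end{equation*}
so $\mathbb P[X_1+\dots+X_n\leq x]\leq C(F_1(x),\dots,F_n(x))$, and then apply the definition of $\chi$ directly with $F_k(x) = u^{\alpha_k(x)}$, $\alpha_k(x)\to\eta_k$, exactly as above to conclude $\log C(F_1(x),\dots,F_n(x))\sim\frac{1}{\chi(\eta_1,\dots,\eta_n)}m(x)$. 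Combining the two bounds and dividing by the negative quantity $m(x)$ (which flips both inequalities into the single asymptotic equality) gives the claim.

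The main obstacle I anticipate is not either inclusion — those are elementary — but the bookkeeping around the weak lower tail dependence function: matching the genuinely two-parameter object $(x, \text{the } \alpha_k(x))$ to the one-parameter limit defining $\chi$, and justifying that $\min_i\log F_i(x/n)\sim\min_i\log F_i(x)$ and $\min_i\log F_i(x)\sim m(x)$ uniformly enough to combine with the $\chi$-limit. This is where slow variation of $F$ at zero is essential, and where I would lean on the squeeze-by-$\varepsilon$ argument together with whatever regularity of $\chi$ (existence and finiteness for all nonnegative exponent vectors with at least one positive entry, plus monotonicity inherited from $C$) is recorded in \cite{tankov.14}. Everything else is routine asymptotic manipulation.
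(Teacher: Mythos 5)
The paper does not actually prove Theorem \ref{main.thm}: it is quoted as a known assertion from \cite{tankov.14}, so there is no in-paper argument to compare against. Your reconstruction is essentially the standard (and, in substance, the cited reference's) argument, and it is sound: the two inclusions give
$C(F_1(x/n),\dots,F_n(x/n))\le \mathbb P[X_1+\dots+X_n\le x]\le C(F_1(x),\dots,F_n(x))$,
and since $\min_i\log F_i(x)\sim(\max_i\eta_i)\log F(x)$ with $\log F(x)\to-\infty$, the two bounds, divided by this negative quantity, give matching $\liminf$ and $\limsup$ equal to $1/\chi(\eta_1,\dots,\eta_n)$. Two small points of bookkeeping: the equivalence $\log F_k(x/n)\sim\log F_k(x)$ comes directly from the assumed slow variation of each $F_k$ at zero (not of $F$, which is only determined up to the asymptotic relation); and in the $\varepsilon$-squeeze you should take $\varepsilon<\min_k\eta_k$ so that the perturbed exponent vectors remain in the domain of Definition \ref{wltdf.def}.

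The one step you flag — passing from the drifting exponents $\alpha_k(x)=\log F_k(x)/\log F(x)\to\eta_k$ to the fixed-exponent limit defining $\chi$ — is indeed the only place where real work is needed, and your monotonicity sandwich reduces it to showing $\chi(\eta\pm\varepsilon\mathbf 1)\to\chi(\eta)$ as $\varepsilon\to 0$. This can be closed without citing extra regularity from \cite{tankov.14}: set $\ell(\alpha):=\max_i\alpha_i/\chi(\alpha)=\lim_{u\to 0}\bigl(-\log C(u^{\alpha_1},\dots,u^{\alpha_n})\bigr)/\log(1/u)$, which exists wherever $\chi$ does. Monotonicity of the copula in each argument makes $\ell$ nondecreasing in each coordinate, and the substitution $u\mapsto u^{1/c}$ shows $\ell$ is positively homogeneous of degree one (equivalently, $\chi$ is homogeneous of degree zero). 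Hence, for $\eta$ with all coordinates positive, $\ell(\eta-\varepsilon\mathbf 1)\ge(1-\varepsilon/\min_i\eta_i)\,\ell(\eta)$ and $\ell(\eta+\varepsilon\mathbf 1)\le(1+\varepsilon/\min_i\eta_i)\,\ell(\eta)$, so $\ell$, and therefore $\chi$, is continuous along your perturbation, and the limit $\varepsilon\to 0$ goes through. With that lemma spelled out, your proof is complete.
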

\begin{theorem}\label{wltdf.prop}${}$
\begin{itemize}
\item Assume that a copula function $C$ has strong tail
dependence in the left tail, meaning that the limit
$$
\lambda_L = \lim_{u\downarrow 0} \frac{C(u,\dots,u)}{u},
$$ 
exists and satisfies
$\lambda_L>0$. Then, the weak lower tail dependence function of $C$
satisfies $\chi(\alpha_1,\dots,\alpha_n) = 1$.
\item Let $C$ be a Gaussian copula with correlation matrix $R$ such that $\det R
\neq 0$. Then,
$$
\chi(\alpha_1,\dots,\alpha_n) = \max_i \alpha_i \min_{w\in \Delta_n}
w^T \Sigma w,\quad \text{for all $\alpha_1,\dots,\alpha_n>0$,}
$$
where the matrix $\Sigma$ has entries $\Sigma_{ij} =
\frac{R_{ij}}{\sqrt{\alpha_i\alpha_j}}$, $1\leq i,j\leq n$. 
\item Let $C$ be an Archimedean copula with a generator function $\phi$
  such that $\log \phi^{-1}$ is regularly varying at $\infty$ with index
$\lambda>0$. Then,
$$
\chi(\alpha_1,\dots,\alpha_n) = \frac{\max(\alpha_1,\dots,\alpha_n)}{(\alpha_1^{1/\lambda}+\dots + \alpha_n^{1/\lambda})^\lambda}.
$$
\end{itemize}
\end{theorem}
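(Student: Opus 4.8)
The three statements are attributed to \cite{tankov.14}; the plan is to evaluate the limit in Definition \ref{wltdf.def} directly in each case. The common first step is the observation that, since $\log u<0$ for $u$ close to $0$, one has $\min_i\log u^{\alpha_i}=(\max_i\alpha_i)\log u$, so that
\[
\chi(\alpha_1,\dots,\alpha_n)=\lim_{u\downarrow 0}\frac{(\max_i\alpha_i)\log u}{\log C(u^{\alpha_1},\dots,u^{\alpha_n})},
\]
and everything reduces to the logarithmic asymptotics of $C(u^{\alpha_1},\dots,u^{\alpha_n})$ as $u\downarrow 0$. For the first bullet I would use only monotonicity of $C$ in each argument together with the Fr\'echet--Hoeffding upper bound $C(v_1,\dots,v_n)\le\min_iv_i$: writing $\alpha_\ast=\max_i\alpha_i>0$ and using $u^{\alpha_i}\ge u^{\alpha_\ast}$ for $0<u<1$, monotonicity gives $C(u^{\alpha_1},\dots,u^{\alpha_n})\ge C(u^{\alpha_\ast},\dots,u^{\alpha_\ast})=\lambda_L u^{\alpha_\ast}(1+o(1))$ with $\lambda_L>0$, while the Fr\'echet bound gives $C(u^{\alpha_1},\dots,u^{\alpha_n})\le u^{\alpha_\ast}$; hence $\log C(u^{\alpha_1},\dots,u^{\alpha_n})=\alpha_\ast\log u+O(1)$ and $\chi\equiv 1$.

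For the Gaussian copula I would realize $C$ through $(Z_1,\dots,Z_n)\sim N(0,R)$ with $U_i=\Phi(Z_i)$, so that $C(u^{\alpha_1},\dots,u^{\alpha_n})=\mathbb P[Z_i\le\Phi^{-1}(u^{\alpha_i}),\ i=1,\dots,n]=\mathbb P[N(0,R)\ge b]$ componentwise, with $b_i=-\Phi^{-1}(u^{\alpha_i})$. The Gaussian tail estimate $\Phi^{-1}(v)\sim-\sqrt{2\log(1/v)}$ as $v\downarrow 0$ gives $b=\sqrt{2\log(1/u)}\,v_u$ with $v_u\to(\sqrt{\alpha_1},\dots,\sqrt{\alpha_n})=:\sqrt\alpha$. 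The main analytic input is the classical logarithmic asymptotics for the orthant tail of a non-degenerate Gaussian vector, namely $-\log\mathbb P[N(0,R)\ge b]\sim\tfrac12\min_{x\ge b}x^\top R^{-1}x$ as $\|b\|\to\infty$ along a fixed direction (a Laplace/large-deviations argument, in the spirit of \cite{hashorva2003multivariate}); the polynomial corrections are irrelevant because $\chi$ only sees logarithms. By homogeneity and continuity of $Q(v):=\min_{x\ge v}x^\top R^{-1}x$ this yields $\log C(u^{\alpha_1},\dots,u^{\alpha_n})\sim(\log u)\,Q(\sqrt\alpha)$, hence $\chi=(\max_i\alpha_i)/Q(\sqrt\alpha)$. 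It then remains to identify $Q(\sqrt\alpha)$: the substitution $x=Dz$ with $D=\mathrm{diag}(\sqrt{\alpha_1},\dots,\sqrt{\alpha_n})$ turns the constraint $x\ge\sqrt\alpha$ into $z\ge\mathbf 1$ and the objective into $z^\top\Sigma^{-1}z$, so $Q(\sqrt\alpha)=\min_{z\ge\mathbf 1}z^\top\Sigma^{-1}z$, and I would prove the reciprocity $\big(\min_{z\ge\mathbf 1}z^\top\Sigma^{-1}z\big)\big(\min_{w\in\Delta_n}w^\top\Sigma w\big)=1$ from Cauchy--Schwarz for one direction ($1=(\sum_iw_i)^2\le(w^\top z)^2\le(w^\top\Sigma w)(z^\top\Sigma^{-1}z)$ for $z\ge\mathbf 1$, $w\in\Delta_n$) and a KKT analysis of the minimizer $z^\ast$ for the matching pair (take $w^\ast$ proportional to $\Sigma^{-1}z^\ast\ge 0$ and use complementary slackness). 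This gives $\chi(\alpha_1,\dots,\alpha_n)=(\max_i\alpha_i)\min_{w\in\Delta_n}w^\top\Sigma w$.

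For the Archimedean copula I would set $g(t):=-\log\phi^{-1}(t)$, which increases to $\infty$ and, by hypothesis, lies in $R_\lambda$. From $u^{\alpha_i}=\phi^{-1}(\phi(u^{\alpha_i}))$ one gets the identity $g(\phi(u^{\alpha_i}))=\alpha_i\log(1/u)$, hence $\phi(u^{\alpha_i})\sim g^{\leftarrow}(\alpha_i\log(1/u))\sim\alpha_i^{1/\lambda}\,g^{\leftarrow}(\log(1/u))$, using that an asymptotic inverse $g^{\leftarrow}$ of $g$ lies in $R_{1/\lambda}$. Summing and applying $g$ once more, with Karamata's uniform convergence theorem to absorb the $o(1)$ factor, gives
\[
\log C(u^{\alpha_1},\dots,u^{\alpha_n})=-g\!\left(\sum_i\phi(u^{\alpha_i})\right)\sim-\left(\sum_i\alpha_i^{1/\lambda}\right)^{\lambda}\log\frac{1}{u},
\]
and dividing by $\min_i\log u^{\alpha_i}=-(\max_i\alpha_i)\log(1/u)$ yields $\chi(\alpha_1,\dots,\alpha_n)=\dfrac{\max(\alpha_1,\dots,\alpha_n)}{(\alpha_1^{1/\lambda}+\dots+\alpha_n^{1/\lambda})^{\lambda}}$.

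The step I expect to be the main obstacle is the Gaussian case. It needs two genuinely nontrivial ingredients: the precise logarithmic tail asymptotics of a multivariate Gaussian over a shifted orthant --- that the rate equals the value of the quadratic program $\min_{x\ge b}x^\top R^{-1}x$, which is subtler than the half-space case because the constrained minimizer generally has strictly positive slack in some coordinates, so one argues via Laplace's method on the convex feasible region --- and the convex-duality identity linking that program to the minimum-variance problem $\min_{w\in\Delta_n}w^\top\Sigma w$. The first and third bullets are, by comparison, routine once the definition of $\chi$ has been rewritten as above.
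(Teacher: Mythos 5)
Your argument is essentially correct, but note that the paper does not prove this theorem at all: it is stated as a collection of ``known assertions'' imported from the cited reference \cite{tankov.14}, so there is no internal proof to compare against, and what you have written is a self-contained substitute for that citation. On its own terms your route is sound. The reduction $\min_i\log u^{\alpha_i}=(\max_i\alpha_i)\log u$ and the first bullet (monotonicity plus the Fr\'echet--Hoeffding bound, giving $\log C(u^{\alpha_1},\dots,u^{\alpha_n})=\alpha_*\log u+O(1)$) are exactly as they should be. In the Gaussian case, the two ingredients you flag are indeed the crux, and both are available: the logarithmic orthant asymptotics $-\log\mathbb P[N(0,R)\geq b]\sim\frac12\min_{x\geq b}x^\top R^{-1}x$ (for directions with strictly positive components, which is all you need since $\alpha_i>0$; the lower bound is precisely the type of estimate the paper quotes from \cite{hashorva2003multivariate}, and the upper bound follows from a one-dimensional projection/Jensen argument like the one used in the proof of Theorem \ref{T:nel}), and the reciprocity $\bigl(\min_{z\geq\mathbf 1}z^\top\Sigma^{-1}z\bigr)\bigl(\min_{w\in\Delta_n}w^\top\Sigma w\bigr)=1$, which your Cauchy--Schwarz plus KKT/complementary-slackness argument establishes correctly; it is the same convex duality that the paper itself exploits in the proof of Theorem \ref{T:nell} and of Lemma \ref{convex.lm}, where $\min_{x\geq v}x^\top\cm^{-1}x$ is rewritten as a maximum over $\mathbb R^n_+$ and then over $\Delta_n$. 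The sandwiching needed because the direction $v_u\to\sqrt\alpha$ only converges (monotonicity in $b$ plus homogeneity and continuity of the quadratic-program value) is a point worth writing out, but it is routine. The Archimedean case via $g=-\log\phi^{-1}\in R_\lambda$, the exact identity $g(\phi(u^{\alpha_i}))=\alpha_i\log(1/u)$, regular variation of the inverse with index $1/\lambda$, and Karamata's uniform convergence theorem is also correct. In short: no gap, and your write-up supplies a proof the paper deliberately omits.
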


\subsection{Copulas and the implied volatility asymptotics}\label{SS:civa}
In this subsection, we study the left-wing behavior of the
implied volatility associated with a basket call option. Recall that we denoted by $(Y_1,\dots,Y_n)$ the vector of logarithmic returns of the risky assets, and by $(\lambda_1,\dots,\lambda_n)$ the corresponding vector of weights. Let $C$ be the copula of the
vector $(Y_1,\dots,Y_n)$, and $G_i$ be the distribution function of $Y_i$ for $i=1,\dots,n$. The implied volatility is considered in this section as a function $k\mapsto I(-k)$ of the variable $-k$, where $k$ is the log-strike defined by $k=\log K$. The tail-wing formulas due to Benaim and Friz (see \cite{benaim.friz.09}) play an important role in the sequel.
\begin{theorem}\label{tailwing.cop}
Let $\alpha>0$, and assume that the following are true: 
\begin{itemize}
\item There exists $\varepsilon>0$ such that $\mathbb
  E[e^{-\varepsilon Y_i}]<\infty$, $i=1,\dots,n$. 
\item For every $1\le i\le n$, the function $k\mapsto-\log G_i(-k)$, $k> k_0$, belongs to the class $R_\alpha$ of regularly varying functions, and there exist positive constants $\eta_1,\dots,\eta_n$ and a function $G$ such that 
\begin{align}
\log G_i(-k)\sim \eta_i \log G(-k)\quad \text{as $k\to\infty$.} \label{samelaw.eq}
\end{align}

\item The copula $C$ admits a weak lower tail dependence function
  $\chi$. 
\end{itemize}
Then,
\begin{equation}
\frac{I(-k)^2 T}{k}\sim \psi\left[-\frac{\log G(-k)}{k}\frac{\max_i
  \eta_i }{\chi(\eta_1,\dots,\eta_n)}\right]
\label{E:comp}
\end{equation}
as $k\to\infty$, where the function $\psi$ is defined in (\ref{E:psik}). 
\end{theorem}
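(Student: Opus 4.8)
The plan is to combine the multidimensional tail asymptotics for the basket, encoded through the weak lower tail dependence function, with the single-asset tail-wing formula of Benaim and Friz. The first step is to determine the logarithmic left-tail decay of the basket $S_T = \sum_{i=1}^n \lambda_i S^i_T = \sum_{i=1}^n e^{Y_i}$ (recall $e^{Y_i} = \lambda_i S^i_T$ so the weights are absorbed). Since multiplying by a positive constant $\lambda_i$ does not affect slow variation at zero nor the relation $\log G_i(-k) \sim \eta_i \log G(-k)$, Theorem \ref{main.thm} applies to the variables $e^{Y_1},\dots,e^{Y_n}$ with distribution functions $F_i(x) = G_i(\log x - \log\lambda_i)$, which are slowly varying at zero because $k\mapsto -\log G_i(-k) \in R_\alpha$ with $\alpha>0$ forces $G_i(-k)$ to decay faster than any power, i.e. $F_i$ slowly varying. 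Hence
\begin{equation*}
\log \mathbb{P}[S_T \leq x] \sim \frac{1}{\chi(\eta_1,\dots,\eta_n)}\min_i \log \mathbb{P}[e^{Y_i}\leq x] = \frac{1}{\chi(\eta_1,\dots,\eta_n)}\min_i \log G_i(\log x - \log\lambda_i)
\end{equation*}
as $x\downarrow 0$. Passing to the log-strike $x = e^{-k}$ and using $\log G_i(-k-\log\lambda_i)\sim \log G_i(-k) \sim \eta_i \log G(-k)$ (the shift by $\log\lambda_i$ is negligible against a regularly varying function of positive index), one obtains
\begin{equation*}
\log \mathbb{P}[S_T \leq e^{-k}] \sim \frac{\min_i \eta_i \log G(-k)}{\chi(\eta_1,\dots,\eta_n)} = \frac{(\min_i\eta_i)\,\log G(-k)}{\chi(\eta_1,\dots,\eta_n)}.
\end{equation*}
Wait — one must be careful here: $\min_i \log \mathbb{P}[e^{Y_i}\le x]$ with $\log G_i(-k) \sim \eta_i \log G(-k)$ and $\log G(-k)\to-\infty$ gives $\min_i \eta_i \log G(-k) = (\max_i\eta_i)\log G(-k)$, since $\log G(-k)$ is negative. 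So in fact
\begin{equation*}
\log \mathbb{P}[S_T \leq e^{-k}] \sim \frac{(\max_i\eta_i)\,\log G(-k)}{\chi(\eta_1,\dots,\eta_n)}, \qquad k\to\infty.
\end{equation*}

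The second step is to translate this into the left wing of the implied volatility via the Benaim–Friz tail-wing machinery. The put price satisfies $P(e^{-k}) = \mathbb{E}[(e^{-k}-S_T)^+] = \int_0^{e^{-k}}\mathbb{P}[S_T\le y]\,dy$, and a routine argument (integration, using that $\log\mathbb{P}[S_T\le y]$ is, up to asymptotic equivalence, a regularly varying function of $-\log y$ with index $1+\alpha>1$ so that the integral is dominated by its endpoint) shows $\log P(e^{-k}) \sim \log \mathbb{P}[S_T\le e^{-k}]$ as $k\to\infty$; equivalently $-\log \mathbb{P}[X_T = e^{-k}\text{-region}]$. The finite negative-moment hypothesis $\mathbb{E}[e^{-\varepsilon Y_i}]<\infty$ transfers to $\mathbb{E}[S_T^{-\varepsilon'}]<\infty$ for some $\varepsilon'>0$ (bound $S_T^{-\varepsilon'} \le n^{\varepsilon'}\max_i (\lambda_i S^i_T)^{-\varepsilon'}$ when $\varepsilon'\le 1$), which is exactly the condition ensuring that Lee's left-side moment formula, and its refinement by Benaim–Friz, apply to the left wing. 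The Benaim–Friz tail-wing formula then states that if $-\log P(e^{-k})/k \to$ (a regularly varying profile) then $I(-k)^2 T/k \sim \psi(-\log P(e^{-k})/k)$ with $\psi$ as in \eqref{E:psik}; plugging in the asymptotics from the first step,
\begin{equation*}
\frac{I(-k)^2 T}{k} \sim \psi\!\left[-\frac{\log P(e^{-k})}{k}\right] \sim \psi\!\left[-\frac{\log G(-k)}{k}\cdot\frac{\max_i\eta_i}{\chi(\eta_1,\dots,\eta_n)}\right],
\end{equation*}
which is \eqref{E:comp}. Here one also needs $-\log G(-k)/k\to$ some limit in $[0,\infty]$ (guaranteed since $-\log G(-k)\in R_\alpha$ with $\alpha>0$, hence $-\log G(-k)/k$ is $R_{\alpha-1}$ and has a limit), and continuity/monotonicity of $\psi$ to pass the equivalence through.

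The main obstacle, and the place where care is genuinely required, is the passage from the crude logarithmic tail asymptotics of $\mathbb{P}[S_T\le x]$ to a form precise enough to feed the Benaim–Friz theorem, together with checking that the regular-variation hypotheses of that theorem are met by the composite profile. Theorem \ref{main.thm} only controls $\log\mathbb{P}[S_T\le x]$ up to asymptotic equivalence, not $\mathbb{P}[S_T\le x]$ itself, so one must verify that the Benaim–Friz formula is robust to this — which it is, because $\psi$ depends only on the logarithmic decay rate — and one must confirm that the map $k\mapsto -\log\mathbb{P}[S_T\le e^{-k}]$ inherits regular variation of index $\alpha$ from the marginals through the identity above (it does, since $\chi(\eta_1,\dots,\eta_n)$ and $\max_i\eta_i$ are positive constants and $-\log G(-k)\in R_\alpha$). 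The remaining steps — the shift-invariance of the marginal tails, the moment transfer, and the integration turning $\mathbb{P}[S_T\le\cdot]$ into $P(\cdot)$ — are routine.
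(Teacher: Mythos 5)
Your overall strategy coincides with the paper's: apply Theorem \ref{main.thm} to the variables $\lambda_i S^i_T$ to get the logarithmic left-tail asymptotics of the basket (your sign correction leading to $\max_i\eta_i$ is right), then invoke the Benaim--Friz tail-wing formula, then pass the asymptotic equivalence through $\psi$. The genuine problem is the detour through the put price. The claimed equivalence $\log P(e^{-k})\sim\log\mathbb P[S_T\le e^{-k}]$ is false in general: since $P(e^{-k})=\int_0^{e^{-k}}\mathbb P[S_T\le y]\,dy$, one has $\log P(e^{-k})=\log\mathbb P[S_T\le e^{-k}]-k$ up to lower-order terms, and the extra $-k$ is negligible only when $\alpha>1$. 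Your moment hypothesis forces $\alpha\ge1$, and the boundary case $\alpha=1$ --- precisely the paper's NIG example, where $-\log G_i(-k)$ is asymptotically linear --- is where the claim fails. Moreover, the put-price form of the Benaim--Friz left-wing formula reads $\frac{I(-k)^2T}{k}\sim\psi\bigl(-\frac{\log P(e^{-k})}{k}-1\bigr)$, not $\psi\bigl(-\frac{\log P(e^{-k})}{k}\bigr)$: you dropped the $-1$. For $\alpha>1$ both inaccuracies are individually harmless (the argument of $\psi$ tends to $+\infty$); for $\alpha=1$ each is material, and they happen to cancel, which is why your final display agrees with \eqref{E:comp} even though the derivation is unsound. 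The clean route, which is the paper's, is to apply the Benaim--Friz theorem directly in terms of the distribution function $F$ of the basket: $-\log F(e^{-k})\in R_\alpha$ together with $\mathbb E[S_T^{-\varepsilon'}]<\infty$ yields $\frac{I(-k)^2T}{k}\sim\psi\bigl(-\frac{\log F(e^{-k})}{k}\bigr)$, with no put price and no $-1$ bookkeeping.

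Two further points. The property needed to feed Theorem \ref{main.thm} is slow variation at zero of $\log F_i$ (this is what the paper verifies, and it follows from $-\log G_i(-k)\in R_\alpha$ plus uniform convergence for slowly varying functions); your justification ``$G_i(-k)$ decays faster than any power, hence $F_i$ is slowly varying'' is incorrect --- a function slowly varying at zero decays slower than any power, and in the $\alpha=1$ case $F_i(x)$ behaves like a power of $x$, so $F_i$ itself is not slowly varying. Finally, to replace the argument of $\psi$ by an asymptotically equivalent expression you need more than continuity, monotonicity and existence of a limit: when $\alpha=1$ the ratio $-\log G(-k)/k$ is only slowly varying and need not converge, so the appeal to ``$R_{\alpha-1}$, hence it has a limit'' does not cover that case. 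The paper's Lemma \ref{L:lim}, via the identity \eqref{E:use} and the resulting bounds $\frac{1}{1+\varepsilon}\psi(u)\le\psi((1+\varepsilon)u)\le\psi((1-\varepsilon)u)\le\frac{1}{1-\varepsilon}\psi(u)$, handles all cases uniformly; you should either cite such a statement or prove it.
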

 \begin{proof}
The distribution function $F_i$ of the random variable $\lambda_i S_i$ is given by 
$$
F_i(x) = G_i(\log x - \log \lambda_i).
$$ 
Since the function $\log G_i$ is regularly varying at
$-\infty$, it is clear that $\log F_i$ is slowly varying at zero and
$$
\log F_i(x)\sim \log G_i(\log x)\sim \eta_i \log G(\log x)
$$ 
as $x\to 0$. It follows from Theorem \ref{main.thm} that 
$$
\log F(x) \sim \frac{\max_i \eta_i}{\chi(\eta_1,\dots,\eta_n)
} \log G(\log x)\quad \text{as $x\to 0$},
$$
where $F$ is the distribution function of $\sum_{i=1}^n \lambda_i S_i$. Equivalently
$$
\log F(e^{-k}) \sim \frac{\max_i \eta_i}{\chi(\eta_1,\dots,\eta_n)
} \log G(-k)\quad \text{as $k\to \infty$},
$$
and hence
\begin{equation}
-\frac{\log F(e^{-k})}{k} \sim-\frac{\log G(-k)}{k}\frac{\max_i \eta_i}{\chi(\eta_1,\dots,\eta_n)} 
\quad \text{as $k\to \infty$}.
\label{E:ere}
\end{equation}
It follows from the assumptions in Theorem \ref{tailwing.cop} that $\log G(-k)\in
R_\alpha$ as $k\to\infty$. Therefore $\log F(e^{-k}) \in
R_\alpha$ as well. Next, using the tail-wing formula of Benaim and Friz (see Theorem 2 in
\cite{benaim.friz.09}), we obtain 
\begin{equation}
\frac{I(-k)^2 T}{k}\sim \psi\left[-\frac{\log F(e^{-k})}{k}\right]\quad
\text{as $k\to\infty$.} 
\label{E:willn}
\end{equation}

We will need the following lemma.
\begin{lemma}\label{L:lim}
Let $\psi$ be the function defined by (\ref{E:psik}), and suppose $\rho_1$ and $\rho_2$ are positive functions on $(0,\infty)$ such that 
\begin{equation}
\frac{\rho_1(x)}{\rho_2(x)}\rightarrow 1\quad\mbox{as}\quad x\rightarrow\infty.
\label{E:pro1}
\end{equation} 
Then
\begin{equation}
\frac{\psi(\rho_1(x))}{\psi(\rho_2(x))}\rightarrow 1\quad\mbox{as}\quad x\rightarrow\infty.
\label{E:pro2}
\end{equation} 
\end{lemma}
\begin{proof}
It is not hard to see that for all $u\ge 0$,
\begin{equation}
\psi(u)=\frac{2}{(\sqrt{u+1}+\sqrt{u})^2}.
\label{E:use}
\end{equation}
The equality in (\ref{E:use}) describes the structure of the function $\psi$ better than the original definition.

Fix $\varepsilon> 0$. Then, using (\ref{E:use}) and the inequality $1<\frac{1}{1-\varepsilon}$, we get
\begin{align*}
\psi((1-\varepsilon)u)&\le\frac{2}{(1-\varepsilon)\left(\sqrt{u+\frac{1}{1-\varepsilon}}+\sqrt{u}\right)^2}
\le\frac{2}{(1-\varepsilon)(\sqrt{u+1}+\sqrt{u})^2} \\
&=\frac{2}{(1-\varepsilon)(\sqrt{u+1}+\sqrt{u})^2}=
\frac{1}{1-\varepsilon}\psi(u).
\end{align*}
Similarly
$$
\psi((1+\varepsilon)u)\ge\frac{1}{1+\varepsilon}\psi(u).
$$
Therefore,
\begin{equation}
\frac{1}{1+\varepsilon}\psi(u)\le\psi((1+\varepsilon)u)\le\psi((1-\varepsilon)u)\le\frac{1}{1-\varepsilon}\psi(u).
\label{E:ost}
\end{equation}

It follows from (\ref{E:pro1}) that for every $\varepsilon> 0$ there exists $x_{\varepsilon}> 0$ such that
$$
(1-\varepsilon)\rho_2(x)\le\rho_1(x)\le(1+\varepsilon)\rho_2(x)
$$
for all $x> x_{\varepsilon}$. Since the function $\psi$ decreases on $(0,\infty)$, we have
$$
\psi((1+\varepsilon)\rho_2(x))\le\psi(\rho_1(x))\le\psi((1-\varepsilon)\rho_2(x))
$$
for all $x> x_{\varepsilon}$. Now, using (\ref{E:ost}), we obtain
$$
\frac{1}{1+\varepsilon}\psi(\rho_2(x))\le\psi(\rho_1(x))\le\frac{1}{1-\varepsilon}\psi(\rho_2(x))
$$
for all $x> x_{\varepsilon}$, and (\ref{E:pro2}) follows.
\end{proof}
Finally, it is not hard to see that (\ref{E:ere}), (\ref{E:willn}), and Lemma \ref{L:lim}
imply (\ref{E:comp}).

This completes the proof of Theorem \ref{tailwing.cop}.
\end{proof}

The next example shows that condition \eqref{samelaw.eq} does not
prevent one from choosing different marginal laws for different
components of the process $(Y_1,\dots,Y_n)$ as long as these laws have a similar
tail behavior. 
\begin{example}
Let us consider the following multidimensional extension of the example given in
Section 5.2 of \cite{benaim.friz.09}. We assume that for $i=1,\dots,n$,
the distribution of the random variable $Y_i$ is normal inverse Gaussian, more precisely,
NIG$(\alpha_i,\beta_i,\mu_i,\delta_i)$. It is also supposed that the
parameters satisfy $\alpha_i>|\beta_i|>0$ and $\delta_i>0$. This means that the moment
generating function of $Y_i$ is given by 
$$
M_i(z) = \exp\left(\delta_i\left\{\sqrt{\alpha_i^2 - \beta_i^2} -
    \sqrt{\alpha_i^2 - (\beta_i+z)^2}\right\}+\mu_i z\right).
$$ 
We refer the reader to \cite{bns_nig} for more details on the normal inverse
Gaussian distribution. In particular, it follows that $Y_i$ has a
density $g_i$ which satisfies the following condition:
$$
g_i(k)\sim C_i |k|^{-\frac{3}{2}} e^{-\alpha_i |k| + \beta_i k},\quad k\to
\pm \infty,
$$ 
where $C_i$ is a constant. Using Theorem 2 in \cite{benaim.friz.09}, we
see that $-\log G_i(-k)\in R_\alpha$ as $k\to +\infty$, and also 
$$
-\log G_i(-k) \sim -\log g_i(-k) \sim (\beta_i-\alpha_i)k,\quad k\to
+\infty. 
$$ 
Therefore, the condition in \eqref{samelaw.eq} holds with $\lambda_i
= \alpha_i-\beta_i$ and $G(k) = e^k$. 

Assuming that the dependence structure of $(Y_1,\dots,Y_n)$ is described by the
Gaussian copula with correlation matrix $R$, we see that 
\begin{align}
\frac{I(-k)^2 T}{k} \sim \psi\left[\frac{1}{\inf_{w\in \Delta_d} w^\perp
  \Sigma w}\right],\quad k\to +\infty,\label{nigres.eq}
\end{align}
where the matrix $\Sigma=[\Sigma_{ij}]$ is such that 
$$
\Sigma_{ij} = \frac{R_{ij}}{\sqrt{(\alpha_i-\beta_i)(\alpha_j-\beta_j)}}.
$$
In other words, the implied variance is asymptotically linear, with a correlation-dependent
limiting slope, which is given by the right-hand side of \eqref{nigres.eq}. 
\end{example}

For the sake of completeness, we include a proposition that is a counterpart of Theorem
\ref{tailwing.cop} in the case of the right tail. This proposition turns out to be somewhat
trivial: the leading order of the implied volatility is determined by
a single component with the fattest tail, and it does not depend on the
copula. Let us denote by $\overline G_i$ the survival function of $Y_i$, i.e.,
the function $\overline G_i(x) =\mathbb P[Y_i \geq x]$. 
\begin{theorem}\label{rightcop.thm}
Let $\alpha>0$, and suppose that the following assumptions hold:
\begin{itemize}
\item There exists $\varepsilon>0$ such that $\mathbb
  E[e^{(1+\varepsilon)Y_i}]<\infty$ for $i=1,\dots,n$.
\item For each $i=1,\dots,n$, the function $k\mapsto-\log \overline G_i(k)$ belongs to the class $R_\alpha$ at
 infinity. 
\end{itemize}
Then,
\begin{equation}
\frac{I(k)^2 T}{k} \sim \psi\left[-\frac{1}{k}\max_i \log \overline
  G_i(k)\right]\quad \text{as $k\to+\infty$.}
\label{E:finally}
\end{equation}
\end{theorem}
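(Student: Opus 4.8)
The plan is to show that, for the right wing, the tail of the basket is governed entirely by its fattest single component, so that the claim reduces to the one-dimensional right-tail-wing formula of Benaim and Friz, exactly as in the proof of Theorem \ref{tailwing.cop}. Set $S=\sum_{i=1}^n\lambda_i e^{Y_i}$ and $\overline{F}(k):=\mathbb P[S\ge e^k]=\mathbb P[\log S\ge k]$. Since $\{\lambda_i e^{Y_i}\ge e^k\}\subseteq\{S\ge e^k\}$ for every $i$, while $\{S\ge e^k\}\subseteq\bigcup_{i=1}^n\{\lambda_i e^{Y_i}\ge e^k/n\}$, a union bound yields the elementary two-sided estimate
\begin{equation*}
\max_i\overline{G}_i(k-\log\lambda_i)\;\le\;\overline{F}(k)\;\le\;n\max_i\overline{G}_i\bigl(k-\log(n\lambda_i)\bigr),
\end{equation*}
and hence, after taking logarithms,
\begin{equation*}
-\log n+\min_i\bigl(-\log\overline{G}_i(k-\log(n\lambda_i))\bigr)\;\le\;-\log\overline{F}(k)\;\le\;\min_i\bigl(-\log\overline{G}_i(k-\log\lambda_i)\bigr).
\end{equation*}
Only the marginals $\overline{G}_i$ enter here, which is precisely why the copula is irrelevant for the right wing.

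The next step is the regular-variation bookkeeping. A function in $R_\alpha$ is invariant under additive shifts of its argument up to asymptotic equivalence (apply the uniform convergence theorem for regularly varying functions with dilation $1-c/k\to1$), so $-\log\overline{G}_i(k-c_i)\sim-\log\overline{G}_i(k)$ for any constants $c_i$. Since there are only finitely many components, for each fixed $\varepsilon>0$ one has $(1-\varepsilon)(-\log\overline{G}_i(k))\le-\log\overline{G}_i(k-c_i)\le(1+\varepsilon)(-\log\overline{G}_i(k))$ for all $i$ and all large $k$, and these inequalities survive taking the minimum over $i$; thus both outer terms in the last display are asymptotically equivalent to $g(k):=\min_i(-\log\overline{G}_i(k))=-\max_i\log\overline{G}_i(k)$. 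Moreover $g(k)\to\infty$ because $\alpha>0$, so the additive constant $-\log n$ is negligible, and the squeeze gives $-\log\overline{F}(k)\sim g(k)$ as $k\to\infty$. Finally, the same $(1\pm\varepsilon)$-argument applied to the slowly varying parts $x^{-\alpha}(-\log\overline{G}_i(x))$ shows that the minimum of finitely many functions of $R_\alpha$ is again in $R_\alpha$; hence $g\in R_\alpha$ and, by asymptotic equivalence, $-\log\overline{F}(k)\in R_\alpha$.

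It remains to check the moment condition and invoke Benaim--Friz. By convexity of $t\mapsto t^{1+\varepsilon}$ and $\sum_i\lambda_i=1$, Jensen's inequality gives $S^{1+\varepsilon}\le\sum_i\lambda_i e^{(1+\varepsilon)Y_i}$, so $\mathbb E[S^{1+\varepsilon}]\le\sum_i\lambda_i\,\mathbb E[e^{(1+\varepsilon)Y_i}]<\infty$ by the first hypothesis. Consequently all assumptions of the right-tail-wing formula (Theorem 2 in \cite{benaim.friz.09}) are met for the log-price $\log S$, and it yields
\begin{equation*}
\frac{I(k)^2T}{k}\sim\psi\!\left[-\frac{\log\overline{F}(k)}{k}\right]\quad\text{as }k\to+\infty.
\end{equation*}
Since $-\log\overline{F}(k)/k\sim-\max_i\log\overline{G}_i(k)/k$ and both quantities are positive for $k$ large, Lemma \ref{L:lim} allows replacing the argument of $\psi$, which gives exactly \eqref{E:finally}.

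I expect no serious obstacle: the statement is, as remarked, almost trivial. The only point requiring mild care is the regular-variation bookkeeping of the second step, namely that the minimum over the finitely many components preserves both the asymptotic order (so the squeeze identifies $-\log\overline{F}$ with $-\max_i\log\overline{G}_i$) and membership in $R_\alpha$ (so that Benaim--Friz is applicable). Everything else reduces to the union bound, Jensen's inequality, and the two cited results.
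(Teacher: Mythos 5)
Your argument is correct and follows essentially the same route as the paper: the same max/union-bound sandwich for the basket tail, regular variation (there phrased as slow variation of $x\mapsto\log\mathbb P[X_i\geq x]$, here as shift-invariance of $R_\alpha$ functions in log-strike) to absorb the factor $n$, the conclusion $\log\overline F(k)\sim\max_i\log\overline G_i(k)$, and then Theorem 1 of Benaim--Friz together with Lemma \ref{L:lim}. Your explicit verification of the moment condition via Jensen's inequality and of the $R_\alpha$ membership of the minimum are details the paper leaves implicit, but they do not constitute a different approach.
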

\begin{proof}
Set $X_i = v_i e^{Y_i}$. Then we get
\begin{align*}
&\mathbb P[X_1+\dots+X_n \geq x] \geq \max_i \mathbb P[X_i \geq x],\\
&\mathbb P[X_1+\dots+X_n \geq x]\leq \mathbb P[\exists i: X_i \geq
\frac{x}{n}] \leq \sum_{i=1}^n \mathbb P[ X_i \geq
\frac{x}{n}]\leq n \max_i \mathbb P[ X_i \geq
\frac{x}{n}].
\end{align*}
Since for each $i$, the function $\log\overline G_i$ is regularly varying at infinity, it follows
that the function $x\mapsto\log \mathbb P[X_i \geq x]$ is slowly varying, and therefore, for $x$ sufficiently large and any $\varepsilon>0$,
$$
\max_i\log \mathbb P[X_i \geq x/n] \leq (1+\varepsilon) \max_i \log \mathbb P[X_i \geq x].
$$
Finally,
$$
\lim_{x\to +\infty}\frac{\log \mathbb P[X_1+\dots+X_n \geq x]}{\max_i
  \log \mathbb P[X_i \geq x]} = 1,
$$
and formula (\ref{E:finally}) follows from Theorem 1 in
\cite{benaim.friz.09} with a similar proof to that of Theorem \ref{tailwing.cop}. 
\end{proof}

\appendix
\section{Proof of Lemma \ref{convex.lm}}
The function $F$ satisfies
$$
F(t,w) = \max_{\lambda >0} \{\theta t + \lambda w^\perp (\mathbf 1+\mu t) -
\frac{\lambda^2 w^\perp \cm w t}{2}\},
$$
where $\mathbf 1$ stands for the $n$-dimensional vector with all
elements equal to $1$.
Therefore,
$$
\max_{w\in \Delta_n} F(t,w) = \max_{u \in \mathbb R_+^n } \widetilde
F(t,u),
$$
with
$$
\widetilde
F(t,u) = \{\theta t + u^\perp (\mathbf 1+\mu t) -
\frac{u^\perp \cm u t}{2}\}.
$$
Since for every $t>0$, $\widetilde F(t,u)$ is strictly concave in $u$,
there exists a unique $\bar u(t)\in \mathbb R^n_+$ with $\bar u(t)\neq
0$ such that $\widetilde F(t,\bar u) =
\max_{u \in \mathbb R_+^n } \widetilde F(t,u)$. This in turn implies
that there exists a unique $\bar w(t)$ such that $F(t,\bar w) =
\max_{w\in \Delta_n } F(t,w)$. It is also easy to see that $\bar u(t)$
depends continuously on $t$. 

Let $\bar f(t) =\widetilde
F(t,\bar u(t))$. We would like to show that $\bar f$ is differentiable
in $t$ and compute its derivative. $\bar u(t)$
may be characterized as follows: for $i=1,\dots,n$
\begin{align}
&[\mathbf 1+ \mu t - t\cm \bar u(t) ]_i = 0\quad \text{if}\quad \bar u(t)_i >0\label{mincarac1}\\
&[\mathbf 1+ \mu t - t\cm \bar u (t)]_i \leq 0\quad \text{if}\quad \bar u(t)_i =0.\label{mincarac2} 
\end{align}
Let $I(t)$ denote the set of indices $i\in\{1,\dots,n\}$ such that
$\bar u(t)_i>0$, and, for a vector $x\in \mathbb R^n$, let $x_{I(t)}$
denote the subset of components of $x$ with indices in $I(t)$:
$x_{I(t)} = \{x_i: i\in I(t)\}$. Furthermore, let $\cm_{I(t),I(t)}$
denote the submatrix of the covariance matrix, containing the elements
$b_{ij}$ with $i\in I(t)$ and $j\in I(t)$. Then, the vector $\bar
u(t)$ satisfies
$$
\bar u(t)_{I(t)} = \frac{1}{t}\cm_{I(t),I(t)}^{-1} (\mathbf 1+\mu
t)_{I(t)},\quad \bar u(t)_{\tilde I(t)}  = 0,
$$
where the set $\tilde I(t)$ contains the indices $i\in\{1,\dots,n\}$
which are not in $I(t)$. 

Now, fix $t\in (0,\infty)$ and for $t'\in (0,\infty)$, define
$$
v(t')_{I(t)} = \frac{1}{t'}\cm_{I(t),I(t)}^{-1} (\mathbf 1+\mu
t')_{I(t)},\quad v(t)_{\tilde I(t)}  = 0
$$
First, assume that for all $i$ such that
$\bar u(t)_i =0$, 
either $[\mathbf 1+ \mu t - t\cm \bar u (t)]_i <0$ (with strict
inequality) or 
$$
[\mathbf 1+\mu t' - t' \cm v(t')]_i = 0
$$
for all $t'\in (0,\infty)$. We shall call this assumption Assumption
1. 
Then  we can find $\delta>0$, such that for
every $t' \in(0,\infty)$ with $|t'-t|<\delta$, $v(t')$ satisfies the characterization
\eqref{mincarac1}--\eqref{mincarac2}. Therefore, $v(t') = \bar
u(t')$. This means that 
$$
\bar f(t') = \theta t' + \frac{1}{2t'} (\mathbf 1+\mu
t')_{I(t)}^{\perp}\cm_{I(t),I(t)}^{-1} (\mathbf 1+\mu
t')_{I(t)}.
$$
Therefore, $\bar f$ is differentiable at $t$ with first derivative given by
\begin{align}\bar f'(t) = \theta - \frac{1}{2t^2}
\mathbf 1_{I(t)}^{\perp}\cm_{I(t),I(t)}^{-1} \mathbf 1_{I(t)} + \frac{1}{2} \mu
_{I(t)}^{\perp}\cm_{I(t),I(t)}^{-1} \mu
_{I(t)} = \theta - \frac{1}{2t} \bar u(t)^\perp (\mathbf 1-\mu t)\label{fderiv.eq}
\end{align}
and second derivative 
$$
\bar f^{\prime\prime}(t) = \frac{1}{t^3}
\mathbf 1_{I(t)}^{\perp}\cm_{I(t),I(t)}^{-1} \mathbf 1_{I(t)} .
$$

Now assume that there exists at least one $i$ such that $\bar u(t)_i =
0$ and $[\mathbf 1+ \mu t - t\cm \bar u (t)]_i =0$, or, equivalently, 
$$
[\mathbf 1+ \mu t' - t'\cm v (t')]_i =0
$$
with $t'=t$. The case when the above equality holds for all $t'$ is
covered by Assumption $1$. Since the left-hand side is linear in $t'$,
this means that for a given index set $I(t)$ and for a given $i$,
there exists only one $t'\in(0,\infty)$ which satisfies the above
equality. Since the number of possible index sets is finite, we
conclude that there is at most a finite number of elements $t\in
(0,\infty)$ which do not satisfy Assumption 1. But then, we can
conclude by continuity that $\bar f$ is strictly convex (which
entails uniqueness of $\bar t$) and
differentiable for all $t\in (0,\infty)$, with the derivative given by
\eqref{fderiv.eq} or alternatively by 
$$
\bar f'(t) = \theta -\frac{1}{2 t^2 \bar w(t)^\perp \cm \bar w(t)} +
\frac{(\bar w(t)^\perp \mu)^2}{2 \bar w(t)^\perp \cm \bar w(t)}. 
$$
Comparing this with the derivative of $f$, which is
easily computed, we see that at the point $\bar t$, these derivatives
coincide. Since this point is characterized by the first order
condition $\bar f'(\bar t)=0$, and the function $f$ is stictly convex,
$f$ also attains its unique minumum at $\bar t$.

\end{document}